\theoremstyle{plain}
\newtheorem{theorem}{Theorem}
\newtheorem*{theorem-bis}{Theorem}
\newtheorem{lemma}[theorem]{Lemma}
\newtheorem{corollary}[theorem]{Corollary}
\newtheorem{claim}[theorem]{Claim}
\theoremstyle{definition}\synctex=1
\newtheorem{conjecture}[theorem]{Conjecture}
\theoremstyle{remark}
\newtheorem{remark}[theorem]{Remark}
\setlist[itemize]{label=--}
\setlist[enumerate]{label=(\arabic*),labelindent=\parindent,leftmargin=*}
\DeclarePairedDelimiter\braces{\{}{\}}
\NewDocumentCommand\set{O{}mg}{\ensuremath{\braces[#1]{#2\IfNoValueTF{#3}{}{\,:\,#3}}}}
\newclass{\lcl}{LCL}
\newclass{\lco}{LCO}
\newclass{\local}{LOCAL}
\newclass{\slocal}{SLOCAL}
\newcommand{\LL}{\mathcal{L}}
\newcommand{\Sigcc}{\Sigma^{\operatorname{cc}}}
\newcommand{\Picc}{\Pi^{\operatorname{cc}}}
\newcommand{\ns}{\textsc{None selected}}
\newcommand{\alos}{\textsc{Alos}}
\newcommand{\leader}{\textsc{Leader election}}
\newcommand{\namedref}[2]{\hyperref[#2]{#1~\ref*{#2}}}
\newcommand{\colambda}[1]{\operatorname{co-\Lambda}_{#1}}
\newenvironment{myabstract}
               {\list{}{\listparindent 1.5em%
                        \itemindent    \listparindent
                        \leftmargin    1cm
                        \rightmargin   1cm
                        \parsep        0pt}%
                \item\relax}
               {\endlist}
\newenvironment{mycover}
               {\list{}{\listparindent 0pt
                        \itemindent    \listparindent
                        \leftmargin    1cm
                        \rightmargin   1cm
                        \parsep        0pt}%
                \raggedright
                \item\relax}
               {\endlist}
\newcommand{\myemail}[1]{\,$\cdot$\, {\small #1}}
\newcommand{\myaff}[1]{\,$\cdot$\, {\small #1}\par\medskip}
\begin{document}

\newgeometry{margin=1in,bottom=1in}

\begin{mycover}
{\huge\bfseries\boldmath Local verification of global proofs \par}
\bigskip

\textbf{Laurent Feuilloley}
\myemail{feuilloley@irif.fr}
\myaff{IRIF, CNRS and University Paris Diderot}

\textbf{Juho Hirvonen}
\myemail{juho.hirvonen@cs.uni-freiburg.de}
\myaff{University of Freiburg}

\medskip
\begin{myabstract}
\noindent\textbf{Abstract.} In this work we study the cost of local and global proofs on distributed verification. In this setting the nodes of a distributed system are provided with a nondeterministic proof for the correctness of the state of the system, and the nodes need to verify this proof by looking at only their local neighborhood in the system.

Previous works have studied the model where each node is given its own, possibly unique, part of the proof as input. The cost of a proof is the maximum size of an individual label. We compare this model to a model where each node has access to the same global proof, and the cost is the size of this global proof.

It is easy to see that a global proof can always include all of the local proofs, and every local proof can be a copy of the global proof. We show that there exists properties that exhibit these relative proof sizes, and also properties that are somewhere in between. In addition, we introduce a new lower bound technique and use it to prove a tight lower bound on the complexity of reversing distributed decision and establish a link between communication complexity and distributed proof complexity.
\end{myabstract}

\vfill

The authors received additional support from ANR project DESCARTES. The first author receives additional support from INRIA project GANG. The second author was supported by Ulla Tuominen Foundation.

\thispagestyle{empty}
\setcounter{page}{0}
\newpage
\restoregeometry

\end{mycover}

\section{Introduction}\label{sec:introduction}

In distributed decision a distributed system must decide if its own state satisfies a given property. When compared to classical decision problems, the crucial difference is that each node of the distributed system must make its own local decision based only on information available in its local neighborhood. We say that the system \emph{accepts} if all nodes accept, and otherwise the system \emph{rejects}.

A distributed system is modeled as a communication graph where edges denote nodes that the connected nodes can directly communicate with each other. Distributed decision where each node gets to see its constant-radius neighborhood in the graph is called \emph{local decision}~\cite{FraigniaudKP11}. It is possible, as the name suggests, to decide local properties of the system, for example whether a given coloring is correct. On the other hand, it is impossible to decide global properties like whether a given set of edges forms a spanning tree.

To decide global properties, nodes can be provided with a nondeterministic proof~\cite{KormanKP05}. Each node gets its own proof string as an additional input. Nodes can gather all input in their constant-radius neighborhood, including these local proof strings, and decide to accept or reject. We say that such a scheme decides a property if there exists an assignment of local proofs to make all nodes accept if and only if the system satisfies the required property.

For example, to prove that a given set of edges forms a spanning tree, each node can be provided with the name of a designated root of the tree, and its distance to the root. Nodes can check that they agree on the identity of the root, and that they have exactly one neighbor with smaller distance to the root along the edges of the spanning tree.

We are interested in minimizing the size of the proof. In particular, we want to minimize the size of the largest label given to a single node. The local proofs often contain redundant information between the different local proof strings. For example, in the previous case each node must know the name of the root. The distances to the root are also highly correlated between neighbors. We approach this question by comparing the size of local proofs to \emph{global} proofs. In this setting each node has access to the same, universal proof string. The decision mechanism remains otherwise the same.

It is easy to see that minimum sizes of global and local proofs bound each other. A global proof can simply be a list of the local proof strings. Conversely, a local proof can copy the same global proof for each node. In this work we study how these two proof sizes relate to each other for different properties.

Unlike in the centralized setting, distributed decision cannot be reversed trivially. This is due to the fact that the distributed decision mechanism is asymmetric: all nodes must accept a correct input, but a failure might only be detectable locally. Decision can be reversed using a logarithmic number of additional nondeterminism~\cite{GoosS16,FeuilloleyFH16}: such a proof involves constructing a spanning tree that points to an error. This is an even more general primitive in distributed proofs: the proof must convince the nodes that a local defect exists somewhere in the graph, and only the nodes that are located close to this defect can verify it. We show that the existing upper bounds are asymptotically tight: reversing decision requires a local proof of logarithmic size.

Since the distributed verification of a proof happens locally, a distributed proof of a global property must carry information between distant parts of the input graph. This has led to the use of lower bound techniques from communication complexity for distributed decision. On the other hand proving lower bounds inside the nondeterministic hierarchy of local decision~\cite{FeuilloleyFH16} with multiple levels of nondeterminism seems to be hard. This is partially due to the fact that current lower bound techniques from communication complexity cease to work. We formalize this intuition by establishing a connection between the nondeterministic local decision hierarchy and the nondeterministic communication complexity hierarchy~\cite{babai86complexity}.

\paragraph{Motivation}
The first proof-labeling schemes were designed in the context of self-stabilizing algorithms, where a distributed algorithm would run on the graph, and would, in addition to the output, keep some information to verify that the state of the network is not corrupted. Similar scenarios exist for global proofs. For example, one may consider a network where the machines compute in a distributed fashion, but an external operator with a view of the whole network can once in a while broadcast a piece of information, such as the name of a leader. As one expects this type of update to be costly, the focus is on minimizing the size of such broadcast information.

Our research belongs to a recent line of work that establishes the foundations of a theory of complexity for distributed network computation. In this context, the certificate comes from a prover, and one studies the impact of non-determinism on computation and the minimal amount of information needed from the prover to decide a task. Global proofs are a natural alternative form of non-determinism. Moreover, in proof-labeling schemes a part of the certificate is often global. For example, the name of a leader is given to all nodes. Global proofs can be used to study how much of such redundant information a local proof must have. Finally, one may consider that global proofs are the most natural equivalent of classical non-determinism: only the algorithm is distributed and we ask what is the cost if distributing the proof.

\paragraph{Related work}
Proof-labeling schemes have been defined in \cite{KormanKP05, KormanKP10}. An important result in the area is the tight bound on the size of the certificates for certifying minimum spanning tree \cite{KormanK07}. Recently, several variations have been defined, for verifying approximation \cite{Censor-HillelPP17}, with non-constant verification radius \cite{OstrovskyPR17}, with a dependency between the number of errors and the distance to the language \cite{FeuilloleyF17}, and variations on the communication model \cite{Patt-ShamirP17}. An analogue of the polynomial hierarchy for distributed decision has also been defined \cite{FeuilloleyFH16}.

Another line of work uses a slightly different notion of non-determinism. Fraigniaud et al.\ \cite{FraigniaudKP11} consider a similar kind of scheme with a prover and local verifier, but with the constraint that the certificates should not depend on the identifiers of the network. For these works, and more generally the complexity theory of distributed decision, we refer to a recent survey~\cite{FeuilloleyF16}.

The idea of a prover for computation in a network, or in a system with several computational units, appears outside of distributed computing, and usually with a global proof. In property testing, models where a prover provides a certificate to the machine that queries the graph have been considered \cite{LovaszV13, GurR15}.
In two-party communication complexity, non-determinism comes as a global proof that both players can access. Along with non-determinism the authors of \cite{babai86complexity} define a hierarchy. Separating the levels of this hierarchy is still a major open problem \cite{GoosP016}.

\paragraph{Our contributions.}
We formalize the notion of \emph{global proofs} for nondeterministic local verification. We study them, in particular comparing the global and local proof complexities of distributed verification.

One main goal of this line of research is to understand the price of locality in nondeterministic distributed verification --- that is, how much information must be repeated in the local proofs of the nodes in order to allow local decision of global problems.
\begin{enumerate}
  \item We show that the price of locality can exhibit the extreme possible values. An example of a maximally \emph{global} property for distributed verification is the language where at most node is selected. This is one of the core primitives in distributed verification: proving that at most one event of a given type happens in the whole graph. On the other hand, we show that when verifying that at least one node is selected, a global proof must use enough bits to essentially copy every local proof label.
  \item We introduce a new proof technique for proving lower bounds for local verification. This proof technique is based on analyzing the neighborhood graph labeled with the local proofs. We use it to show that reversing decision requires $\Omega(\log n)$-bit local proof and a $\Omega(n \log n)$-bit global proof. Our proof technique is somewhat similar to the one used by G\"{o}\"{o}s and Suomela to prove their $\Omega(\log n)$ lower bounds for local proofs~\cite{GoosS16}. Their proof technique relies on combining two fragments of \emph{yes}-instances to produce an accepting \emph{no}-instance. This is not sufficient for our results, since we want to prove lower bounds for languages for which two fragments of \emph{yes}-instances joined together might still produce a \emph{yes}-instance.
  \item We establish a connection between nondeterministic verification and nondeterministic communication complexity. Proving separations for the hierarchy of nondeterministic communication complexity has been an open question since its introduction over 30 years ago~\cite{babai86complexity}. We show that proving similar separations for the hierarchy of nondeterministic local decision is connected to this question: for every boolean function $f$ we construct a distributed language such that it can be decided on the $k$th level if $f$ can be decided on the $k$th level of the communication complexity hierarchy. For global proofs, this can be strengthened to show that verification schemes also imply communication protocols. This formalizes the previous intuition that proving lower bounds for nondeterministic local verification is potentially hard as it would imply proving lower bounds for nondeterministic communication complexity.
\end{enumerate}

\section{Model and definitions}\label{sec:model}

The network is modeled by a simple graph with no loops. The size of the graph is denoted by~$n$. The nodes are given distinct identifiers in a range that is polynomial in $n$, that is, IDs on $O(\log n)$ bits.

\paragraph{Distributed decision.}

A \emph{distributed language} is a set of graphs, whose nodes and edges can have inputs. Distributed languages are often assumed to be computable (from the centralized computing perspective), but this is irrelevant for the current paper. An example is the language \textsc{spanning tree}, which is the set of graphs whose edges are labeled with 1 or 0, such that edges labeled with 1 form a spanning tree of the graph.

A \emph{local decision algorithm} with radius $t$, is a local algorithm in which every node $v$ first gathers all the information about its $t$-neighborhood (the structure of the graph, the IDs of the nodes, the local inputs), and possibly some proofs given by one or several provers, and outputs a decision, accept or reject, based on this information. The distance $t$ is constant with respect to the size of the network $n$. The verifier is uniform \emph{i.e.} it does not know the size of the graph.

In a basic \emph{local decision scheme}, there is no prover. We say that the scheme decides a language, if for every labeled graph: all the nodes accept, if and only if, the labeled graph belongs to the language. In a \emph{non-deterministic scheme}, the following should hold: there exists a proof that makes every node accept, if and only if, the language belongs to the language.

\paragraph{Different types of proofs.}
A \emph{(purely) local proof}, the prover provides every node with its own certificate. For such a scheme, the local proofs have the same size which depends only on the language and on the size of the network $n$. The size of the proof is denoted by $s_\ell(n)$. This is the classic framework of proof-labeling schemes. We introduce \emph{(purely) global proofs}, where the prover provides a single certificate, and every node can access it. Its size is denoted by~$s_g(n)$. Finally, in \emph{mixed proofs}, the prover provides a global proof and local proofs. The size, denoted by $s_m(n)$, is the sum of the size of the global proof, and the size of the concatenation of the local proofs.

\paragraph{Price of locality.}
We define a \emph{Price of Locality} for proofs, by analogy with the \emph{Price of Anarchy} in algorithmic game theory  \cite{KoutsoupiasP09, Papadimitriou01}. Note that this is not the same as the price of locality that appears in the title of \cite{Kuhn2005}.
The price of locality (PoL) of a language $\LL$ is defined as the ratio between the size of the concatenation of the purely local certificate, divided by the size of the mixed certificate. That is: \[PoL(n) =\frac{n\cdot s_\ell(n)}{s_m(n)}.\]

\section{The price of locality}\label{sec:price}

In this section, we study the size of local, mixed and global proofs for different problems, and the price of locality that follows. 

\subsection{Proof sizes}

In this subsection some general inequalities between the sizes of the different proof sizes are proven. We then discuss the definition of the price of locality.

\begin{theorem}\label{thm:inequalities}
For any language, the optimal proof sizes respect the following inequalities.
\begin{align}
s_\ell(n) &\leq s_m(n) \leq s_g(n)\\
s_m(n) &\leq n\cdot s_\ell(n)\\
s_g(n) &\leq n\cdot s_\ell(n) + O(n\log n).
\end{align}
\end{theorem}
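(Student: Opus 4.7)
The plan is to prove each inequality by a direct simulation between the different proof formats; all three reductions are straightforward encodings.

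For the chain (1), to see that $s_\ell(n) \leq s_m(n)$ I would take an optimal mixed scheme consisting of a global certificate $g$ of some size $s_g^m$ and per-node local certificates $\ell_v$ of some common size $s_\ell^m$, with $s_m(n) = s_g^m + n \cdot s_\ell^m$. A purely local scheme is then obtained by giving each node $v$ the concatenation $(g, \ell_v)$: the per-label size is $s_g^m + s_\ell^m$, which is at most $s_m(n)$ as soon as $n \geq 1$, and the local verifier runs the mixed verifier after extracting both components. The inequality $s_m(n) \leq s_g(n)$ is immediate: any purely global proof is a mixed proof with empty local part, so its total size is just $s_g(n)$.

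For (2), given an optimal purely local scheme with per-label size $s_\ell(n)$, interpret the empty string as the global certificate and keep the original local labels; the resulting mixed proof has total size $0 + n \cdot s_\ell(n)$, and the mixed verifier is the local verifier unchanged.

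The only slightly substantive step is (3). Starting from an optimal local scheme with labels $\ell_v$, I would build a single global certificate by concatenating, in some canonical order (say, sorted by identifier), the pairs $(\mathrm{ID}_v, \ell_v)$. Since identifiers live in a range polynomial in $n$, each takes $O(\log n)$ bits, so the total length is $n \cdot s_\ell(n) + O(n \log n)$, matching the claimed bound. During verification, a node $u$ already sees the identifiers appearing in its $t$-neighborhood; it parses the global string, looks up the label associated with each such identifier, and then runs the original local verifier on its $t$-neighborhood with the retrieved labels. Soundness and completeness transfer verbatim from the local scheme. The only point that requires care is this identifier-tagging: without it a node could not tell which substring of the global proof corresponds to which of its neighbors, and paying for these tags is exactly what accounts for the additive $O(n \log n)$ overhead.
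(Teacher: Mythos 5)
Your proof is correct and follows essentially the same route as the paper's: the first inequality is obtained by packing the global part into each local label, the second by viewing a global proof as a mixed proof with empty local part, and the third by concatenating identifier-tagged local labels into one global string. The only difference is that you spell out the verification step for the global encoding, which the paper leaves implicit.
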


\begin{proof}
The first line of inequalities mainly follows from the definitions. Suppose one is given a mixed certificate for a language, with local certificates of size $f(n)$ each, and a global certificate of size $g(n)$.
The size of this mixed certificate is $s_m(n)=n \cdot f(n)+g(n)$. Then one can create a local proof of size $f(n)+g(n)$, by giving to every node its local part concatenated with the global part. Thus $s_\ell(n) \leq s_m(n)$.
The inequality $s_m(n) \leq s_g(n)$ holds because the mixed proof is a generalization of the global proof.
Similarly, if there exists local certificates of size $s_\ell(n)$, then one can use them in the mixed model. The size measured in the mixed model will then be $n\cdot s_\ell(n)$.
Finally, given local certificates, one can craft a global certificate. The global certificate consists in a list of couples, each couple being an ID and the local certificate of the node with this ID.  The size is in $n\cdot s_\ell(n) + O(n\log n)$ because identifiers are on $O(\log n)$ bits.
\end{proof}

For a given language, the price of locality is defined as the ratio $n\cdot s_\ell(n)/s_m(n)$, that is the size of the concatenation of all the optimal local proofs divided by the size of the optimal global proof. The inequalities above insure that with this definition the ratio is between 1 and $n$. Note that if we had defined the ratio with global proofs instead of mixed ones, a priori the price could be smaller than 1 for some languages, thus not a price per se. This is because, unlike mixed proofs, global proofs are not a generalization of local proofs. Section \ref{sec:bipartite} gives an example where this happens. However, for the remaining of this section, we show upper bounds on global proofs, because these are stronger, and the bounds on the price of locality follow.

\subsection{High price of locality}

In this section we prove that global proofs can be much more efficient than local proofs. In other words, it can be very costly to distribute the proof.
We consider the language \textsc{At-most-one-selected} (\textsc{Amos}), that has been defined and used in~\cite{FraigniaudKP13}. In this problem, the nodes are given binary inputs, and the \emph{yes}-instances are the ones such that at most one node has input 1.

\begin{theorem}
The global proof size for \textsc{Amos} is in $O(\log n)$.
\end{theorem}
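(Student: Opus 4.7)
The plan is to give a direct construction: the prover writes the identifier of the unique selected node (if any) as the global proof, and each node uses this $O(\log n)$-bit string together with its own input to decide. Since identifiers lie in a range polynomial in $n$, a single ID uses $O(\log n)$ bits, which matches the claimed bound; we also reserve a distinguished value $\bot$ (representable in the same $O(\log n)$ bits) to be used when no node is selected.

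The verification rule I would use is entirely local and does not even require inspecting neighbors: each node $v$ reads the global proof $\pi$; if $v$'s input is $0$, then $v$ accepts unconditionally, and if $v$'s input is $1$, then $v$ accepts if and only if $\pi$ equals $v$'s own identifier. This is clearly implementable in radius $0$, and the only information used from outside of $v$ is $\pi$.

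For completeness, if the instance is a \emph{yes}-instance of \textsc{Amos}, the prover writes $\pi = \mathrm{ID}(u)$ for the unique selected node $u$ (or $\pi = \bot$ if no node is selected). Every node with input $0$ accepts by definition, and the node $u$, if it exists, accepts because $\pi = \mathrm{ID}(u)$. For soundness, suppose two distinct nodes $u \neq u'$ both have input $1$. Whatever string $\pi$ the prover broadcasts, at most one of $\mathrm{ID}(u)$ and $\mathrm{ID}(u')$ can equal $\pi$, so the other selected node rejects. Hence the scheme accepts if and only if at most one node is selected, and the global proof size is $O(\log n)$.

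The construction is short and the main thing to check is simply that the verification does not have to look at neighbors at all, which is what makes the global-proof model genuinely powerful here: the prover centralizes the only piece of information any node ever needs, namely the identity of the candidate selected vertex. No nontrivial obstacle is expected; this statement is really setting up the contrast with the $\Omega(n \log n)$ global lower bound that will follow for the companion problem.
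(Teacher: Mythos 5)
Your construction and verification rule coincide with the paper's own proof: the prover broadcasts the ID of the unique selected node (or a null symbol when none exists), and a node with input $1$ accepts exactly when the broadcast matches its own ID. The case analysis for completeness and soundness is the same, so this is essentially the paper's argument.
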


\begin{proof}
The prover strategy on \emph{yes}-instances is the following. If there is exactly one selected node, the prover provides the ID of the node as the global certificate, otherwise it provides an empty label. The verification algorithm is, for every node $v$: if $v$ is selected and the certificate is not its~ID, then reject, otherwise accept.
It is easy to check that this scheme is correct. First, if no node is selected, all nodes accept, for all certificate. Second, if one nodes is selected, then the prover provides its ID as a certificate, and thus the selected node accepts, and all the other nodes too. Finally, if two or more nodes are selected, at most one of them has its ID written in the global certificate, because the IDs are distinct~; thus at least one node is rejecting.
\end{proof}

In \cite{GoosS16}, the authors prove that verifying that exactly one node is selected requires $\Omega(\log n)$ local certificate. The proof basically shows that without this amount of proof, an instance with two leaders would be accepted. This reasoning holds for \textsc{Amos}, and we can derive a $\Omega(\log n)$ lower bound for local certificates as well. The follows that the price of locality is as large as it can be, that is, order of $n$.

\begin{corollary}
The price of locality for \textsc{Amos} is in $\Theta(n)$.
\end{corollary}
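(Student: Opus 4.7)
The plan is to sandwich the price of locality between $\Omega(n)$ and $O(n)$. The upper bound $PoL(n) \leq n$ is immediate from \theoremref{thm:inequalities}: the inequality $s_\ell(n) \leq s_m(n)$ gives $n \cdot s_\ell(n)/s_m(n) \leq n$. So all the work is in the matching lower bound.

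For the lower bound I would first pin down the mixed proof size. The previous theorem gives $s_g(n) = O(\log n)$, and combining this with $s_m(n) \leq s_g(n)$ from \theoremref{thm:inequalities} yields $s_m(n) = O(\log n)$. Second, I would establish $s_\ell(n) = \Omega(\log n)$. As the paragraph preceding the corollary indicates, this is a direct transfer of the \cite{GoosS16} argument for \textsc{Exactly-one-selected}: the heart of their construction is to exhibit two accepting \emph{yes}-fragments (each containing one selected node) that can be glued into a graph containing two selected nodes while preserving every verifier's view. Such a glued graph is also a \emph{no}-instance of \textsc{Amos}, so the same fooling argument rules out any local proof-labeling scheme for \textsc{Amos} with $s_\ell(n) = o(\log n)$.

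Putting the two estimates together yields
\[
PoL(n) \;=\; \frac{n \cdot s_\ell(n)}{s_m(n)} \;\geq\; \frac{n \cdot \Omega(\log n)}{O(\log n)} \;=\; \Omega(n),
\]
which combined with the upper bound gives $PoL(n) \in \Theta(n)$.

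The only subtle point is checking that the Göös--Suomela two-fragment construction genuinely produces a \emph{no}-instance of \textsc{Amos}, not merely of \textsc{Exactly-one-selected}. This is automatic: any fooling graph built by concatenating two selected singletons has (at least) two selected nodes, and \textsc{Amos} rejects exactly such inputs. So no new combinatorial argument is required, only the observation that \textsc{Amos} and \textsc{Exactly-one-selected} share the same \emph{no}-instances with $\geq 2$ selected nodes, which is where the lower bound lives.
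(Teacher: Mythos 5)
Your proposal is correct and matches the paper's own reasoning: the paper likewise deduces $s_m(n)=O(\log n)$ from the $O(\log n)$ global certificate, cites the G\"{o}\"{o}s--Suomela gluing argument to get $s_\ell(n)=\Omega(\log n)$ (noting, as you do, that the two-fragment fooling instance has two selected nodes and is thus also a \emph{no}-instance of \textsc{Amos}), and combines these with the trivial $PoL(n)\le n$ bound from Theorem~\ref{thm:inequalities}. No substantive difference.
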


Other examples of problems with high price of locality are the languages that are very non-local. A general upper bound on local proof size is $O(n^2)$\cite{KormanKP10} (if the inputs have constant size). This is because the prover can always encode the whole graph in the certificate with $O(n^2)$ bits, and a local verification is enough to ensure that the graph described in the certificate is the same as the communication graph. For this kind of proof the sum of the sizes of the local proof is $O(n^3)$, whereas the associated global proof is $O(n^2)$. Several languages are proven to have no better local proof than this one, including the language of graphs that have no non-trivial automorphism, and, up to logarithmic terms, the language of non-3-colorable graphs \cite{GoosS16}.

\subsection{Intermediate price of locality}

In this section, we show that having a global certificate helps saving space for the well-studied language \textsc{Minimum Spanning Tree} (\textsc{Mst}). In this case, the price of locality is $\Theta(\log n)$, thus intermediate between $n$ (the previous subsection), and constant (the next section).

The language  \textsc{Mst} is the set of weighted graphs, in which a set of edges is selected, and forms a minimum spanning tree of the graph. The edge weights are assumed to be polynomial in $n$, thus they can be written on $O(\log n)$ bits. For simplicity we suppose that all weights are distinct.

In \cite{KormanK07}, the authors show that there exist local proofs of size $O(\log^2n)$ for \textsc{Mst}, and that this bound is tight. We show a simple global proof that has size $O(n\log n)$. As a mixed proof for the simpler language \textsc{Spanning Tree} requires $\Omega(n\log n)$ (see Section \ref{sec:alos}), this bound is tight. 

\begin{theorem}\label{thm:mst}
The global proof size for \textsc{Minimum Spanning Tree} is in $O(n\log n)$.
\end{theorem}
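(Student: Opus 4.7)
The plan is to use as the global certificate an explicit $O(n \log n)$-bit description of the claimed MST as a rooted weighted tree, and to combine a per-node consistency check with a cycle-property check carried out inside the certificate. Concretely, I would have the prover pick a root $r$ and write, for every node ID $u$ appearing in the graph, a triple consisting of $u$, the ID of the parent of $u$ in the tree (with a dummy symbol $\bot$ when $u = r$), and the weight of the edge from $u$ to its parent. Since identifiers and edge weights are both on $O(\log n)$ bits, each triple takes $O(\log n)$ bits and the whole certificate fits in $O(n \log n)$ bits.

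The verification at a node $v$ then proceeds in two stages. First, a structural check: $v$ locates its own entry in the certificate; confirms that the name written as its parent is a real graph-neighbor joined to $v$ by a selected edge of the announced weight; confirms that the set of entries naming $v$ as parent matches, via IDs and weights, the remaining selected edges incident to $v$; and finally verifies, purely from the certificate, that every non-root entry has a parent and that iterating the parent map from every entry terminates at the unique root. Second, an optimality check: for each non-selected incident edge $(v, u)$ of weight $w$, node $v$ computes from the certificate the $v$-to-$u$ path in the claimed tree by walking the two parent chains up to their lowest common ancestor, and verifies that every edge on this path has weight at most $w$.

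For completeness, on a \emph{yes}-instance the prover writes the real MST (rooted arbitrarily) and all checks succeed by construction. The main content is soundness. The certificate-only tree check together with the local parent/child consistency check will force the certificate tree to coincide with the selected edge set on the real node set: a ``ghost'' entry could only be attached in the tree either to another ghost (which would break the single-root check by producing a separate ghost component) or to a real node (whose local check would then reject, because the neighbor named by the certificate does not actually appear in its neighborhood). Once the certificate tree is identified with the selected subgraph, the optimality check is exactly the classical cycle-property characterization of MSTs; since weights are assumed distinct, this characterization is tight, so the selected subgraph must be the unique MST.

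The hard part I expect is precisely the soundness argument against ghost entries, because the verifier does not know $n$ and therefore cannot simply count the entries of the certificate to ensure they match the real node set. My plan is to rule out ghosts indirectly, through the interaction between the certificate-only tree-structure check and the node-local parent/child consistency check sketched above, rather than by any explicit count. The remaining ingredients---size accounting, walking parent chains inside the certificate to recover tree paths, and the cycle-property characterization of the MST---are routine.
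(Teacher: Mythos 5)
Your proof is correct and takes essentially the same approach as the paper: the paper encodes the tree as an explicit weighted edge list rather than as your parent-pointer table, but both use an $O(n\log n)$-bit global description of the claimed MST, combine a certificate-only tree check (acyclic and connected, or equivalently single-root with well-founded parent chains) with per-node consistency against the actual incident selected edges, and conclude by the cycle property. Your explicit handling of ghost entries is a point the paper treats only implicitly, via the requirement that the certified edge set be connected while any edge from a real node to a ghost would fail that node's local consistency check.
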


\begin{proof}
We first describe the scheme.
On a \emph{yes}-instance the prover provides a list of the selected edges with their weights. This global certificate has size $O(n \log n)$.

We now describe the verification algorithm. Every node first will checks that the certificate is correct regardless of the graph. That is, every node will check that:
\begin{itemize}
\setlength\itemsep{1pt}
\item The certificate is a well-formed edge list. Let $L$ be this.
\item The list $L$ describes an acyclic graph. That is that there is no set of nodes $w_1, w_2, ..., w_k$ such that $(w_1,w_2), (w_2,w_3),..., (w_{k-1},w_k)$, and $(w_k, w_1)$ appear in the list.
\item The list $L$ describes a connected graph. That is for any pair of nodes present in the list, there exists a path in the list that connects them.
\end{itemize}

Then every node $v$ of the graph checks locally that:
\begin{itemize}
\setlength\itemsep{1pt}
\item The $L$ is consistent with the edges that are adjacent to $v$.
\item The nodes $v$ has an adjacent selected edge.
\item For every $e=(v,w)$ in $E\setminus L$, and every edge $e'$ on the path from $v$ to $w$ in $L$, the weight of $e'$ is smaller than the weight of $e$.
\end{itemize}

We now prove the correctness of the scheme. The first part of the verification insures that the set of edges described by $L$ forms an acyclic connected graph. The two first checks of the second part insure that it is spanning the graph, and that it contains the selected edges. As it is a spanning tree, it must then be exactly the set of selected edges. Finally, remember that the so-called \emph{cycle property} states that a spanning tree verifying the last item of the previous algorithm is minimal~\cite{CormenLRS09}.
\end{proof}


\section{Locality for free and reversing decision} \label{sec:alos}

In this section, we show that for some languages there exists local proofs of size $O(\log n)$ and that any mixed proof has size $O(n \log n)$. It follows that in this case, the price of locality is constant, that is the locality of the proofs comes for free.

The language we consider, called \textsc{At least one selected} (\alos), consists of all labeled graphs such that at least one node has a non-zero input label. We say that a node with a non-zero input label is \emph{selected}. Proving that at least one node has some special property (being the root, having some intput, being part of some special subgraph) is an important subroutine in many scheme. 

On a more fundamental perspective, reversing decision basically deals with proving that some node is rejecting, which falls into the scope of the \alos. It has long been known that $O(\log n)$ local proof is sufficient for reversing decision, and the current section shows that not only this is optimal, but also one cannot gain by using global proofs.

\begin{theorem} \label{thm:alos}
A mixed proof for the language \alos{} requires $\Omega(n \log n)$ bits.
\end{theorem}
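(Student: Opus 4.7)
The plan is a contradiction argument on the path $P_n$, combining the $n$ singleton \emph{yes}-instances $I_i=(P_n,\{v_i\})$ via a cut-and-paste construction. Let $t=O(1)$ be the verification radius and suppose, for contradiction, that a mixed scheme for \alos{} on $P_n$ has size $s_m(n)=o(n\log n)$. Let $\pi_i=(G_i,\ell_i^1,\ldots,\ell_i^n)$ be an accepting proof for $I_i$. For each cut position $k\in[n]$ define the \emph{$k$-signature} of $\pi_i$ by
\[ \sigma_i(k)\;=\;(G_i,\;\ell_i^{k-t},\ldots,\ell_i^{k+t}). \]
Whenever $|i-k|>t$, the inputs around $v_k$ in $I_i$ all vanish, so the view of $v_k$ in $I_i$ is fully determined by $\sigma_i(k)$, and this view is accepting.

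The core step is the cut-and-paste. Suppose we can exhibit a cut $k$ and indices $i'<k-t$ and $i>k+t$ with $\sigma_i(k)=\sigma_{i'}(k)$ (so in particular $G_i=G_{i'}$). Form the hybrid proof $\pi^{\star}=(G_i,\;\ell_i^1,\ldots,\ell_i^k,\;\ell_{i'}^{k+1},\ldots,\ell_{i'}^n)$ and feed it to the no-instance $(P_n,0^n)$. Every $v_j$ with $j\leq k-t$ sees exactly its view in $I_i$ (its inputs were already zero in $I_i$ because $v_i$ is on the right of the cut), every $v_j$ with $j\geq k+t+1$ sees exactly its view in $I_{i'}$, and every $v_j$ within distance $t$ of the cut sees a blend that, because the two $k$-signatures agree on the window of radius $t$ around $k$, coincides with both the $I_i$-view and the $I_{i'}$-view at $v_j$. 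All these views are accepting, so the scheme accepts the no-instance---a contradiction.

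What remains is to force such a collision to exist whenever $s_m(n)=o(n\log n)$, and this is where the new labelled-neighbourhood-graph technique enters. I would first bin the $n$ instances by their global proof $G_i$, so that inside a bin the $k$-signatures are determined by the $2t+1$ local proofs in the window. The aim is then to exhibit, for some $k$, a left-instance $i'<k-t$ and a right-instance $i>k+t$ that share a signature at $k$. A single-cut pigeonhole only yields $\Omega(\log n)$; the crux is to play all $n$ cuts against each other simultaneously. Observe that if \emph{no} such collision exists at any $k$, then at every cut $k$ the set of signatures used by left-instances and the set used by right-instances are disjoint, so each $\pi_i$ must carry, at every $k$, enough information to place itself on the correct side. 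Summed over the $n$ cut positions, this disjointness constraint forces $\Omega(\log n)$ extra bits per position, giving the desired $\Omega(n\log n)$ overall. The formal extraction of the collision from this constraint I would do by a Cauchy--Schwarz-type averaging over cut positions within a fixed global-proof bin.

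The main obstacle is precisely this last counting step. Single-cut pigeonhole is too weak, because a large global proof can in principle absorb all distinguishing information at one cut; the key insight is that the \emph{same} global proof is shared across all $n$ cuts, so its information budget is amortised, while the local proofs must still supply $\Omega(\log n)$ bits per node to separate left from right at every cut. Carrying out this amortisation rigorously, while keeping the blend at the boundary of the cut-and-paste correct (which is why the window of radius $t$ agreement, not merely agreement of a single local proof, is required), is the technical heart of the argument.
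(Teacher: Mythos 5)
Your cut-and-paste mechanism is in the right spirit (it is essentially the ``back edge'' idea in the paper's proof), but the instance family you chose makes the theorem false, so the argument cannot close. On the path $P_n$, \alos{} admits a \emph{constant-size} local proof: label the selected node with a marker and every other node with a pointer towards it; a degree-$1$ endpoint whose pointer leaves the path has nowhere to go and must reject, so the all-zero path cannot be fooled. This means your family of singleton instances $I_i$ on $P_n$ simply does not exhibit the $\Omega(n\log n)$ behaviour you are trying to prove. It also explains exactly why the collision you hope to force need never exist: with the pointer scheme, $\sigma_i(k)$ is ``all pointing left'' when $i>k+t$ and ``all pointing right'' when $i<k-t$, so left- and right-signatures are disjoint at every cut with $f(n)=O(1)$ and an empty global part. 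Your informal claim that disjointness ``forces $\Omega(\log n)$ extra bits per position'' is therefore wrong: the left/right split at all $n$ cuts is a single monotone threshold determined by one integer $i$, i.e.\ $O(\log n)$ bits of information in total, not $\Omega(\log n)$ per position. The paper avoids this by working on consistently oriented \emph{cycles}, where no endpoint can anchor a pointer chain.

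There is a second, independent gap in the counting. Even after moving to cycles, you only have $n$ \emph{yes}-instances (one per selected vertex), so the pigeonhole over global certificates can at best yield $g(n)=\Omega(\log n)$; there is simply not enough entropy in the instance family to force $g(n)=\Omega(n\log n)$ by any averaging, Cauchy--Schwarz or otherwise. The paper's construction manufactures $b!\approx 2^{\Theta(n\log n)}$ distinct \emph{yes}-instances by cutting the cycle into $b\approx n/(2r+1)$ ID-blocks and taking all permutations of them; each accepted instance then corresponds to a path in a ``labelled-block'' graph $G_{\mathcal{B},L}$, two instances with the same global certificate and the same multiset of labelled blocks are shown to produce an accepting all-zero sub-cycle via a back edge, and Stirling on $b!$ gives the $f(n)(2r+1)b+g(n)\ge\log b!$ inequality. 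That exponential blow-up of the instance family is not a technical convenience you can amortise around: it is precisely the source of the $n\log n$, and your proposal is missing it.
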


The theorem is equivalent to state that the language requires either $\Omega(\log n)$ bits per local proof or $\Omega(n \log n)$ bits of global proofs. Now we are ready to begin the proof of Theorem~\ref{thm:alos}.

\begin{proof}[Proof of Theorem~\ref{thm:alos}]
Consider a mixed scheme with local certificates of size $f(n)$ and a global certificate of size $g(n)$. Let $r$ be the verification radius of the scheme.

\paragraph{Blocks.}
The lower bound instances are consistently oriented cycles of length at most $n=(b+1)(2r+1)$, for some integer $b$. Cycles are constructed from blocks of $2r+1$ nodes: the $i$th block is a path $B_i = (v_j, v_{j+1}, \dots, v_{j+2r})$, where $j = i(2r+1)+1$, oriented consistently from $v_j$ to $v_{j+2r}$. Each node $v_j$ is labeled with the unique identifier $j$.

\paragraph{Constructing instances from blocks.}

Let $\pi \colon [b] \to [b]$ be a permutation on the set of the first $b$ blocks. Each permutation defines a cycle $C_{\pi}$ where we take the blocks in the order given by $\pi$, and finally take the $(b+1)$th block. Each pair of consecutive blocks in $\pi$ is connected by an edge, and $B_{b+1}$ is connected to $B_{\pi^{-1}(1)}$.

Finally, the center node $v_{b(2r+1)+r+1}$ of $B_{b+1}$ is labeled with a non-zero label, making the instance a \emph{yes}-instance. All other nodes are labeled with the zero-label. Denote this family of permuted \emph{yes}-instances by $\mathcal{C} = \{ C_{\pi} \}_{\pi}$.

\paragraph{Labeled blocks.}

The prover assigns a local proof of $f(n)$ bits to each node. Thus, there are $2^{f(n)(2r+1)}$ different labeled versions of each block. We call these \emph{labeled blocks}. Denote by $B_{i,\ell}$ the block $B_i$ labeled according to $\ell$. We call $B_i$ the \emph{type} of $B_{i,\ell}$

Consider two labeled blocks, $B_{i,\ell}$ and $B_{j,\ell'}$, in this order, linked by an edge. We say that labeled blocks are accepting from $B_{i,\ell}$ to $B_{j,\ell'}$ with global certificate $L$ if, when we run the verifier on the nodes that are at distance at most $r$ from an endpoint of the connecting edge, all these nodes accept. We denote this by $B_{i,\ell} \rightarrow_L B_{j,\ell'}$.

For each choice $L$ of the global certificate, this edge relation defines a graph $G_{\mathcal{B}, L}$ on the set of labeled blocks. A path in $G_{\mathcal{B},L}$ corresponds to a labeled path fragment in which all nodes at least $r$ steps away from the path's endpoints accept. Finally, an accepting cycle is a cycle in $G_{\mathcal{B},L}$ such that all nodes accept.

\paragraph{Bounding the overlap of certificates.}

For each $C_{\pi} \in \mathcal{C}$, there must exist an accepting assignment of certificates to the nodes. Let $L$ denote the global part of this accepting certificate. Such a $C_{\pi}$ corresponds to a directed cycle in $G_{\mathcal{B},L}$. Note that in this cycle the last edge can be omitted as it would always link the last block to the first block. Then $C_{\pi}$ corresponds to a directed path $P(C_{\pi},L)$ of length $b$ in $G_{\mathcal{B},L}$. Denote the set of labeled blocks on this path by $S(C_{\pi},L)$.

Let $\mathcal{C}_L$ denote the set of instances such that there exists an accepting local certification given the global certificate $L$. Every \emph{yes}-instance has an accepting certification, so there must exist $L^*$ with
\[
|\mathcal{C}_{L^*}| \ge |\mathcal{C}| / 2^{g(n)}.
\]

Now consider any two instances $C_{\pi}$ and $C_{\pi'}$ in $\mathcal{C}_{L^*}$. We drop the specification of the global certificate from the notation. Assume that $C_{\pi}$ and $C_{\pi'}$ use the same set of blocks, that is $S(C_{\pi},L^*) = S(C_{\pi'}, L^*)$. Also assume without loss of generality, that $\pi$ is the identity permutation. Now in $P(C_{\pi'})$ there must exist a \emph{back edge} with respect to $\pi$, that is, an edge between labeled blocks $B$ and $B'$, of types $B_{\pi'^{-1}(i)}$ and $B_{\pi'^{-1}(i+1)}$ respectively, such that $\pi'^{-1}(i) > \pi'^{-1}(i+1)$. This is because we assumed that the instances consist of the same blocks, but are different. Therefore at some point an edge of $C_{\pi'}$ must go backwards in the order of $\pi$. We also have that $B,B' \neq B_{b+1}$ as if there is no back edge before reaching $B_{b+1}$, we must have $C_{\pi} = C_{\pi'}$.

This implies that there is an accepting cycle formed by taking first the path from $B$ to $B'$ along $P(C_{\pi})$ and then an edge from $B'$ to $B$. This cycle does not contain a selected node. It follows that there is a \emph{no}-instance of size at least $2(2r+1)$ and a certification that causes the verifier to accept the instance. Therefore we have the following lemma.

\begin{lemma} \label{lem:distinct-sets}
  For all pairs of instances $C_{\pi}$, $C_{\pi'}$ with the same accepting global certificate $L$, we have that $S(C_{\pi},L) \neq S(C_{\pi'},L)$.
\end{lemma}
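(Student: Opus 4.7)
The plan is to argue by contradiction, essentially formalizing the reasoning sketched immediately before the lemma statement. Suppose that two distinct permutations $\pi \neq \pi'$ satisfy $S(C_\pi, L) = S(C_{\pi'}, L)$, and without loss of generality take $\pi$ to be the identity, so that $P(C_\pi)$ visits its labeled blocks in the order of types $B_1, B_2, \dots, B_b$. I would first observe that, since $P(C_{\pi'})$ visits exactly the same set of labeled blocks but in a different order, its traversal must contain at least one \emph{back edge}, i.e.\ an edge from a labeled block of type $B_s$ to one of type $B_t$ with $s > t$. Moreover, among these, at least one back edge has both endpoints of type distinct from $B_{b+1}$: otherwise $P(C_{\pi'})$ would visit the first $b$ labeled blocks in the same order as $P(C_\pi)$, forcing $\pi' = \pi$.

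Next, fix such a back edge in $P(C_{\pi'})$ from a labeled block $B$ of type $B_s$ to a labeled block $B'$ of type $B_t$ with $s > t$, and neither of type $B_{b+1}$. The key construction is to splice the subpath of $P(C_\pi)$ running from $B'$ forward to $B$ with the single back edge from $B$ to $B'$. Since $\pi$ is the identity, $B'$ appears before $B$ in $P(C_\pi)$, so this subpath exists. The result is a closed cycle in $G_{\mathcal{B}, L}$ using only labeled blocks in $S(C_\pi, L)$, which in the underlying graph corresponds to a consistently oriented cycle of at least $2(2r+1)$ nodes. By the defining property of $\rightarrow_L$, every node within distance $r$ of one of the inter-block edges of this cycle accepts under the global certificate $L$; since each block has exactly $2r+1$ nodes, every node in the cycle is within distance $r$ of such an edge, and so all nodes accept. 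But this cycle avoids $B_{b+1}$, hence contains no selected node, and is therefore a \emph{no}-instance of \alos{} accepted by the scheme, contradicting correctness and establishing $S(C_\pi, L) \neq S(C_{\pi'}, L)$.

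The part I expect to require the most care is verifying rigorously that local acceptance transfers from the two originating cycles $C_\pi, C_{\pi'}$ to the newly spliced cycle. Concretely, each node of the new cycle must see, within its radius-$r$ view, exactly the same identifiers, labels, proofs and adjacency pattern as some node in one of the original instances; this is precisely what the block length $2r+1$ buys us, because the $r$-neighborhood of any inter-block edge is contained entirely in the two labeled blocks it joins, and the verifier's output depends only on that neighborhood together with the fixed global certificate $L$. Once this locality argument is written out carefully, every node of the spliced cycle behaves as in the accepting run of some $C_\pi$ or $C_{\pi'}$, so the constructed no-instance is accepted and the desired contradiction follows.
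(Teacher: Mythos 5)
Your proposal is correct and follows the paper's own argument, which is given in the prose immediately preceding the lemma: locate a back edge in $P(C_{\pi'})$ not involving $B_{b+1}$, splice the forward subpath of $P(C_{\pi})$ from $B'$ to $B$ with the back edge $B\to B'$, and observe that the resulting accepting cycle avoids the selected node. (You even state the splicing direction correctly -- the paper's sentence ``path from $B$ to $B'$ along $P(C_\pi)$ and then an edge from $B'$ to $B$'' has the two directions accidentally swapped.)
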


\begin{remark}
Note that the contradicting instances can be of size $2(2r+1)$ but the identifiers can be of size $n$ and the certificates of size $f(n)$. Therefore the lower bound only holds for uniform verifiers that do not get any guarantees except that 1) the identifiers come from the set $[n+c]$, for some constant $c$, and 2) the certificates are of size at least $f(n)$.

Alternatively it is possible to consider \alos{} on possibly disconnected instances so that every connected component must have at least one node selected. In this case the proof will fool even a non-uniform prover.
\end{remark}

\paragraph{Counting argument.} By Lemma~\ref{lem:distinct-sets}, each pair of permutations $\pi, \pi'$ in $\mathcal{C}_{L^*}$ must induce a different set of labeled blocks that form the accepting certifications of instances $C_{\pi}$ and $C_{\pi'}$. The number of different permutations in $\mathcal{C}_{L^*}$ is at least $b! / 2^{g(n)}$. On the other hand, the number of different sets of labeled blocks, selecting a block of each type, is $2^{f(n)(2r+1)b}$. As shown in Lemma~\ref{lem:distinct-sets}, to have a legal certification, we must have that $2^{f(n)(2r+1)b + g(n)} \ge b!$.

Using Stirling's approximation we get that $f(n)(2r+1)b + g(n) \ge b \log_2 b - (\log_2e) b + O(\ln b)$. Since $b = \Theta(n)$ and $r=O(1)$, this implies that either $f(n) = \Omega(\log n)$ or $g(n) = \Omega(n \log n)$. Thus the mixed proof has size $\Omega(n \log n)$ \qedhere
\end{proof}

\begin{corollary} \label{cor:inversion}
    Inverting decision with a single additional level of nondeterminism requires local certificates of size $\Omega(\log n)$ or global certificates of size $\Omega(n \log n)$.
\end{corollary}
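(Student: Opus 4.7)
The plan is to exhibit a language whose complement is trivially decidable without any proof, but whose verification via inverted decision must pay the bound of Theorem~\ref{thm:alos}. The natural candidate is \alos{} itself, viewed as the inversion of the decision problem \textbf{\textsc{None selected}}, the language of labeled graphs in which every node carries the zero label.

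First I would observe that \textsc{None selected} admits a deterministic local decision scheme with radius zero and no certificate: each node simply accepts if and only if its own input label is $0$. Consequently, inverting the decision of \textsc{None selected} by adding a single level of nondeterminism yields exactly a nondeterministic scheme for the complement language, namely \alos{}. Formally, any generic construction that transforms a local decision scheme for a language $\LL$ into a nondeterministic scheme for the complement $\overline{\LL}$, using local certificates of size $\ell(n)$ and a global certificate of size $g(n)$, would, when applied to \textsc{None selected}, produce a mixed proof for \alos{} whose local part has size $\ell(n)$ and whose global part has size $g(n)$.

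Next I would invoke Theorem~\ref{thm:alos}: any mixed proof for \alos{} must satisfy $\ell(n) = \Omega(\log n)$ or $g(n) = \Omega(n \log n)$. Substituting the reduction above, any inversion procedure applied to \textsc{None selected} must already use either $\Omega(\log n)$-bit local certificates or $\Omega(n \log n)$-bit global certificates, which is exactly the claim of the corollary. The quantification is over inversion procedures that work for all languages, so exhibiting a single bad instance (here, \textsc{None selected}) suffices.

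I do not expect any real obstacle: the only subtlety is to be careful that Theorem~\ref{thm:alos} is stated for mixed proofs, so that the dichotomy between local and global size transfers directly. The remark following Lemma~\ref{lem:distinct-sets} already discusses the uniformity assumptions on the verifier, and the same assumptions apply verbatim in the reduction, so no additional care is required.
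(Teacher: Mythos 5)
Your proposal is correct and takes essentially the same approach as the paper: both observe that \textsc{None selected} is decidable with no proof, that its complement is \alos, and that Theorem~\ref{thm:alos} then directly yields the claimed lower bound for any inversion procedure.
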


\begin{proof}
Consider the language \ns{}, that is, the language of labeled graphs such that all nodes have the zero label. Clearly this language is in $\Lambda_0$ and \alos{} is its complement. Finally, by Theorem~\ref{thm:alos}, deciding \alos{}, that is, reversing the decision of \ns{} requires local certificates with $\Omega(\log n)$ bits or global certificates with $\Omega(n \log n)$ bits.
\end{proof}

The proof can be adapted to several other problems, namely \leader{} (the set of graphs where exactly one node is selected), \textsc{spanning tree}, and \textsc{odd-cycle} (the set of cycles of odd length), providing a lower bound for mixed proof systems for these languages. 

\begin{corollary} \label{cor:le}
  Any mixed proof system for \leader{} requires local certificates of size $\Omega(\log n)$ or global certificates of size $\Omega(n \log n)$.
\end{corollary}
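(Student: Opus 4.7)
The plan is to reuse the construction of Theorem~\ref{thm:alos} essentially verbatim, after making one key observation about the instances $\mathcal{C} = \{C_\pi\}_\pi$ built there. In that family, the only node with a non-zero label is the center of the distinguished block $B_{b+1}$; so every $C_\pi$ already has \emph{exactly one} selected node, and is therefore a \emph{yes}-instance for \leader{} as well. Hence any mixed proof system for \leader{} with parameters $f(n),g(n)$ must accept every $C_\pi$, and the combinatorial setup --- labeled blocks, the edge relation $\to_L$, the graph $G_{\mathcal{B},L}$, and the encoding of an accepting $C_\pi$ as a length-$b$ directed path $P(C_\pi,L)$ in $G_{\mathcal{B},L^*}$ --- can be imported unchanged.

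Next I would rerun the back-edge argument. Fix $L^*$ with $|\mathcal{C}_{L^*}| \ge |\mathcal{C}|/2^{g(n)}$ and suppose two distinct permutations $\pi,\pi' \in \mathcal{C}_{L^*}$ produced the same set of labeled blocks $S(C_\pi,L^*) = S(C_{\pi'},L^*)$. Exactly as in the \alos{} proof, $P(C_{\pi'})$ must contain a back edge between labeled blocks $B,B'$ of types different from $B_{b+1}$. Splicing the $B$-to-$B'$ sub-path of $P(C_\pi)$ with this back edge yields an accepting cycle in $G_{\mathcal{B},L^*}$ that avoids $B_{b+1}$ altogether, so the resulting graph is labeled with \emph{zero} selected nodes. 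This is a \emph{no}-instance for \leader{} (it has the wrong count: zero rather than one), so the acceptance contradicts correctness of the scheme. This gives the analogue of Lemma~\ref{lem:distinct-sets}: distinct $\pi,\pi' \in \mathcal{C}_{L^*}$ yield distinct block-sets.

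The counting step is then identical: the number of $\pi$ in $\mathcal{C}_{L^*}$ is at least $b!/2^{g(n)}$, while the number of ways of choosing one labeled block of each type is $2^{f(n)(2r+1)b}$, so $2^{f(n)(2r+1)b+g(n)} \ge b!$, and Stirling with $b=\Theta(n)$, $r=O(1)$ forces $f(n)=\Omega(\log n)$ or $g(n)=\Omega(n\log n)$.

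I do not expect a real obstacle: the whole point is that the forged cycle we use to derive a contradiction has \emph{no} selected node, which happens to lie outside \leader{} for the same structural reason it lies outside \alos{}. The only sanity check worth writing out is that \leader{} is indeed violated by a zero-leader instance (so the verifier really must reject it), and that the remark about uniform verifiers and identifiers carries over without modification, since neither the block construction nor the back-edge forging touches the identifier range.
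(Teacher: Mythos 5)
Your proposal is correct and takes essentially the same approach as the paper's own (much shorter) proof: reuse the \alos{} construction unchanged, note that each $C_\pi$ has exactly one selected node and so is a \emph{yes}-instance for \leader{}, and observe that the forged cycle with zero selected nodes is a \emph{no}-instance for \leader{} just as it is for \alos{}. The paper compresses this into a single remark via $\leader{} \subsetneq \alos{}$, but the underlying reasoning is identical to yours.
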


\begin{proof}[Proof of Corollary~\ref{cor:le}]
  Consider the proof of Theorem~\ref{thm:alos}. The family $\mathcal{C}$ of \emph{yes}-instances for \alos{} is also a family of \emph{yes}-instances for \leader{}. Since \leader{} $\subsetneq$ \alos{}, the proof of Theorem~\ref{thm:alos} produces \emph{no}-instances of \leader{} that the verifier accepts.
\end{proof}

\begin{corollary}\label{cor:ST}
Any mixed proof system for \textsc{spanning tree} requires local certificates of size $\Omega(\log n)$ or global certificates of size $\Omega(n \log n)$.
\end{corollary}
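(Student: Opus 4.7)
The plan is to reuse almost verbatim the lower-bound construction of \theoremref{thm:alos}. I would keep the same family of cyclic graphs built from identifier-labeled blocks $B_1,\ldots,B_{b+1}$ arranged in cycle order $B_{\pi^{-1}(1)},\ldots,B_{\pi^{-1}(b)},B_{b+1}$, so that the only structural difference between $C_\pi$ and $C_{\pi'}$ is the permutation of blocks in positions $1,\ldots,b$. The only change relative to the \alos{} proof is how the input is assigned: instead of selecting one node, I would label every edge of the cycle as a spanning-tree edge (input $1$), except the single closing edge from $B_{b+1}$ back to $B_{\pi^{-1}(1)}$, which is labeled as a non-tree edge (input $0$). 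Each $C_\pi$ then encodes a Hamiltonian path, and is therefore a \emph{yes}-instance of \textsc{spanning tree}.

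With this input in hand, I would rerun the counting argument of \theoremref{thm:alos} verbatim: fix a mixed scheme with parameters $f(n)$, $g(n)$, and radius $r$, pick a global certificate $L^*$ that is accepting on at least a $2^{-g(n)}$ fraction of the $b!$ instances, apply \lemmaref{lem:distinct-sets}, and use Stirling on $2^{f(n)(2r+1)b+g(n)}\ge b!$ to derive the dichotomy $f(n)=\Omega(\log n)$ or $g(n)=\Omega(n\log n)$. The content of the argument is that if two permutations $\pi,\pi'$ share the same set of labeled blocks then the back-edge construction yields a short accepting cycle, which must be a \emph{no}-instance of the language under consideration.

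The remaining step, and the only point where the adaptation needs verification, is to check that this short accepting cycle really is a \emph{no}-instance of \textsc{spanning tree}. Here the placement of the unique non-tree edge is crucial: because it is always incident to the anchor block $B_{b+1}$, and because the back-edge argument explicitly guarantees $B,B'\neq B_{b+1}$, every inter-block edge and every intra-block edge on the grafted cycle carries input $1$ in both $C_\pi$ and $C_{\pi'}$. Thus the selected-edge set of the accepting cycle contains a cycle and cannot form a spanning tree, completing the contradiction. I expect this placement subtlety to be the only nontrivial point; the rest is a direct translation of the \alos{} argument.
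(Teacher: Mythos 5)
Your proposal is correct and follows essentially the same route as the paper's own proof sketch: restrict to cycles, take \emph{yes}-instances with exactly one unselected edge incident to the anchor block $B_{b+1}$, rerun the \alos{} counting argument, and observe that the grafted short cycle has all edges selected (hence is not a spanning tree) precisely because $B_{b+1}$ is excluded from it. The paper phrases the restricted problem as ``at least one non-selected edge,'' which is exactly the edges-version of \alos{} that you spelled out, including the placement subtlety you correctly flagged as the only point requiring verification.
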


\begin{proof}[Proof sketch]Consider two types of instances: cycles where all the edges are selected, and cycles where all edges but one are selected. The first instances are not in the language, the second are. We can rephrase this restricted problem as: there is at least one non-selected edge. Then the same type of proof as for ALOS gives the bound.
\end{proof}

\begin{corollary}\label{cor:odd-cycle}
Any mixed proof system for \textsc{odd-cycle} requires local certificates of size $\Omega(\log n)$ or global certificates of size $\Omega(n \log n)$.
\end{corollary}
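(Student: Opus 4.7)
The plan is to reuse the block-permutation framework of \theoremref{thm:alos}, adjusting the block structure so that the yes-instances are odd cycles while every ``shortcut'' cycle produced by a back edge is forced to have even length. I would build the lower-bound instances from $b$ \emph{regular blocks} of length $2(2r+1)$ together with one \emph{parity block} $B_{b+1}$ of length $2r+1$, assembled as a consistently oriented cycle in which $B_{b+1}$ sits in a fixed position and the remaining $b$ blocks are permuted by $\pi:[b]\to[b]$. The total length is $(2b+1)(2r+1)$, which is odd, so every such $C_\pi$ belongs to \textsc{odd-cycle}. Since all blocks still have length at least $2r+1$, the definitions of labeled block, of the edge relation $\to_L$, of the graph $G_{\mathcal{B},L}$, and of the path $P(C_\pi,L)$ carry over verbatim.

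The main obstacle, and the reason the block sizes must be tuned, is parity control of the shortcut cycle: with all blocks of length $2r+1$ as in the \alos{} construction, a back edge skipping an odd number of blocks would yield another odd cycle, itself a yes-instance, and the intended contradiction would evaporate. Using even-length regular blocks removes this obstruction. I would then repeat the pigeonhole step to pick a global certificate $L^*$ with $|\mathcal{C}_{L^*}|\ge b!/2^{g(n)}$ and replay the back-edge argument. Taking two distinct $\pi,\pi'\in\mathcal{C}_{L^*}$ with $S(C_\pi,L^*)=S(C_{\pi'},L^*)$ and, without loss of generality, $\pi$ the identity, some back edge in $P(C_{\pi'},L^*)$ lies between two \emph{regular} labeled blocks (the parity block being fixed in position), and splicing this back edge into $P(C_\pi,L^*)$ yields an accepting cycle formed by $k\ge 2$ regular blocks only. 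Its length is $2k(2r+1)$, which is even, so the cycle is a no-instance for \textsc{odd-cycle}, giving the desired contradiction and the analogue of \lemmaref{lem:distinct-sets}.

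The proof finishes with the standard counting step: there are $2^{f(n)\cdot 2(2r+1)}$ labelings of each regular block and $2^{f(n)\cdot(2r+1)}$ of the parity block, hence at most $2^{f(n)[(4r+2)b+(2r+1)]}$ possible sets $S(C_\pi,L^*)$, while the distinct-sets lemma forces at least $b!/2^{g(n)}$ of them. Taking logarithms, applying Stirling's approximation, and using $b=\Theta(n)$ with $r=O(1)$ give $f(n)\cdot O(n)+g(n)\ge n\log n - O(n)$, and therefore $f(n)=\Omega(\log n)$ or $g(n)=\Omega(n\log n)$.
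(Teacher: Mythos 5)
Your proof is correct, but it takes a genuinely different route from the paper's. Both arguments reuse the block-permutation machinery of \theoremref{thm:alos} and both must solve the same parity problem: in the \alos{} construction a back-edge shortcut could produce an odd cycle, which would still be a \emph{yes}-instance of \textsc{odd-cycle} and give no contradiction. You solve this by making the permutable blocks \emph{even}-length (length $2(2r+1)$) and adding a single odd-length parity block that is never part of a shortcut; then every spliced cycle has length $2k(2r+1)$ automatically, and you keep the full set of $b!$ permutations. The paper instead keeps all blocks of odd length $2r+1$, two-colours the $b$ permutable blocks into equal halves, and restricts the \emph{yes}-instances to cycles that alternate colours (breaking alternation only at the special block); a back edge then closes a properly coloured cycle, which must use an even number of blocks and hence have even length, and the counting is over the $((b/2)!)^2$ colour-respecting permutations. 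Both counts give $\Theta(n\log n)$ after Stirling. Your version is arguably cleaner: it decouples the parity bookkeeping from the combinatorics, needs no colouring, and inherits the $b!$ count verbatim from \alos{}; the paper's version keeps all blocks the same size, at the cost of restricting the family of permutations. One small point worth stating explicitly in your write-up: with the longer regular blocks the edge relation $\to_L$ is still well-defined because blocks of length $\ge 2r+1$ keep all views of nodes within distance $r$ of a connecting edge confined to the two adjacent blocks, and the internal nodes of each block accept independently of the rest of the cycle.
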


\begin{proof}[Proof sketch]
The proof of Corollary \ref{cor:odd-cycle} consists in a refinement of the proof of Theorem \ref{thm:alos}. We use the same block machinery. Note that each block has of odd length by construction. Thus the instances made of blocks that are in the language are the ones that use an odd number of blocks, and the ones outside of the language have an even number of block. 

We take $b$ to be even, such that $b+1$ is odd. As in the proof of Theorem \ref{thm:alos} the (b+1)th block is special. We divide the $b$ other blocks into two sets of equal size, and assign color white to the blocks of the first set, and black to the ones the second set. The set of \emph{yes}-instances we consider is the set of cycles on the $b+1$ blocks that alternate between white and black blocks, except on the special uncolored block. The graph $G_{\mathcal{B},L}$ is defined in the same way as before, and each node inherits the color of its block. By construction, no edge links two labeled blocks of the same color. 

We claim that Lemma 6 holds in this new context. Indeed, if there is a so-called a backward edge, then there is a cycle where the special block is not present. As all the edges respect the coloring, this means that the cycle has even length, thus it is a \emph{no}-instance. We finish the proof with a similar counting argument. The number of permutations that satisfy the coloring condition is $((b/2)!)^2$. And the number of sets of labeled blocks of each type is still $2^{f(n)(2r+1)b}$. Then Stirling approximation again implies that the mixed proof size is in $\Omega(n\log n)$.
\end{proof}

A consequence of these corollaries is that all the $\Omega(\log n)$ lower bounds obtained in \cite{GoosS16} for local certificates can be lifted to $\Omega(n \log n)$ mixed proofs with our technique. However for the problem \textsc{Amos} we studied in the previous section, our technique does not work, which is consistent with the fact that an $\Omega (n \log n)$ lower bound would contradict the $O(\log n)$ upper bound we show. As already said, the technique of \cite{GoosS16} works for \textsc{Amos}, and provides the $\Omega(\log n )$ bound for local proofs. The reason our technique fails is because we show that if the certificates are too short then one can shorten the cycles taht are \emph{yes}-instances, which is not useful for $\textsc{Amos}$, as a `subinstance' of this problem is still in the language: one can only remove selected nodes. The authors of \cite{GoosS16} show that one can glue different \emph{yes}-instances together and get a configuration that is still accepted by the nodes, and for \textsc{Amos} this basically means one can glue different instance with one node selected, and then get an instance with more than one node selected, and this instance is still accepted, which rises a contradiction. Note that because of this duality, the proof technique of \cite{GoosS16} could not help to get lower bound for \alos, even when looking only at local proofs.

It is also worth noting that the intersection of the languages \textsc{Amos} and \textsc{Alos}, is \textsc{Leader Election}. For this language, it has long been known that a PLS has size $\Theta(\log n)$, and is formed by the certificates of a spanning forest, along with the ID of the leader given to all the nodes. The results of the current and previous sections show that this decomposition is somehow mandatory: one basically needs a global part of size $\Theta(\log n)$, and a local part of size $\Theta(\log n)$.

\section{Beyond free locality}\label{sec:bipartite}

The language \textsc{Bipartite} is the set of bipartite graphs. Local proofs of constant size exist for this language: the prover can just describe a 2-coloring of the cycle by giving a bit to each node. We conjecture that for this language, even when restricted to cycles, global proofs are larger than the sum of the local proof sizes, which is $O(n)$.

\begin{conjecture}\label{conj:even-cycles}
For \textsc{Bipartite}: $s_g(n) \in \Theta(n \log n)$.
\end{conjecture}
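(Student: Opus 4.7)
The upper bound $s_g(n)=O(n\log n)$ is immediate from \theoremref{thm:inequalities}(3) applied to the standard constant-size local proof for \textsc{Bipartite} on cycles (each node receives its bit in a proper $2$-colouring), which gives $s_\ell(n)=O(1)$ and hence $s_g(n)\le n\cdot O(1)+O(n\log n)=O(n\log n)$. All the work is in the matching lower bound.

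My plan is to adapt the block-and-counting technique of \theoremref{thm:alos} to the purely global setting, using the parity of the total cycle length instead of a selected node as the distinguishing feature between yes- and no-instances. Fix the verification radius $r$ and construct identified blocks $B_1,\dots,B_b$ of $2r+1$ consecutive identified nodes exactly as in the proof of \theoremref{thm:alos}. For each permutation $\pi$ of $[b]$, form a cycle $C_\pi$ by arranging the blocks in the order prescribed by $\pi$; take $b$ even, so that $|C_\pi|=b(2r+1)$ is even and every $C_\pi$ is a yes-instance. There are $\Theta(b!)$ such instances, so by pigeonhole some global certificate $L^\ast$ accepts at least $b!/2^{s_g(n)}$ of them.

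In the purely global regime the verifier's decision at a junction between two consecutive blocks depends only on $L^\ast$ and the identifiers of the two blocks. I would define a directed \emph{compatibility graph} $H_{L^\ast}$ on vertex set $[b]$ with an arc $B_i\to B_j$ iff that junction is accepting under $L^\ast$. Then every accepted block arrangement, on any subset of blocks, corresponds to a simple directed cycle in $H_{L^\ast}$, and a simple directed cycle of length $\ell$ in $H_{L^\ast}$ yields a real cycle of length $\ell(2r+1)$. Because $2r+1$ is odd, any odd simple directed cycle in $H_{L^\ast}$ would give a no-instance accepted by $L^\ast$, contradicting correctness. Hence $H_{L^\ast}$ contains no odd simple directed cycle, and each accepted $C_\pi$ is a directed Hamiltonian cycle in such a graph, which is the analogue of \lemmaref{lem:distinct-sets} in this setting.

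The hard part is converting this structural constraint into the $\Omega(n\log n)$ counting bound. A directed graph on $b$ vertices with no odd simple directed cycle can still admit roughly $((b/2)!)^2\approx b!/2^{O(b)}$ Hamiltonian cycles (witness: a directed bipartite complete graph), so the direct counting yields only $s_g(n)=\Omega(n)$, off by a $\log n$ factor. To close this gap I would try one of two refinements: (i) enrich the block construction, for instance by pre-colouring the blocks into two classes and only counting colour-alternating arrangements, so that the bipartite-like compatibility graphs cannot realise all $b!$ orderings at once; or (ii) switch to an information-theoretic argument showing that $L^\ast$ must essentially encode an identifier-to-side mapping, which has entropy $\Theta(n\log n)$ when identifiers are drawn from a polynomial range. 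I expect (ii) to be the more promising but the more delicate direction, the central obstacle being that an unconstrained global verifier may share information between different identifier positions in arbitrary ways, which is exactly what blocks classical fooling-set style lower bounds in our setting and is presumably why the authors leave the statement as a conjecture.
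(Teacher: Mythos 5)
This statement is explicitly labelled a conjecture in the paper, and the authors do \emph{not} prove it: immediately after stating it they write ``We cannot prove the lower bound of the conjecture, but we can prove weaker inequalities,'' and the accompanying \theoremref{thm:bipartite} only establishes $\alpha\max\{n,\log\log M\}\le s_g(n)\le \beta\min\{M,n\log M\}$, where $M$ is the identifier range. Your write-up is therefore appropriately honest: you carry the upper bound correctly, push the block technique as far as it goes, correctly diagnose that the resulting compatibility-graph counting bottoms out at $\Omega(n)$, and openly leave the $\log n$ gap open, which is exactly where the paper also stops (for polynomial ID ranges, $\log\log M=\Theta(\log\log n)$ is dominated by $n$, so the paper's lower bound is also $\Omega(n)$).

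Your route to the $\Omega(n)$ floor is, however, genuinely different from the paper's. You count Hamiltonian cycles in a block-compatibility digraph that has no odd simple directed cycle, and observe that a bipartite tournament-like digraph can still host $((b/2)!)^2$ Hamiltonian cycles, so the pigeonhole yields only $\Omega(b)=\Omega(n)$. The paper instead proves a structural lemma (their \lemmaref{lem:2-colouring-blocks}): from any accepting global certificate $c$ one can \emph{extract} a $2$-colouring $f_c\colon[1,M]\to\{0,1\}$ of blocks consistent with every block-cycle accepted under $c$, by showing that the block-adjacency digraph $G_c$ has no odd directed cycle and has strongly connected components, hence is bipartite. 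They then build a table of colours indexed by certificates and blocks, and get $\Omega(n)$ from the fact that each balanced length-$n$ bit string must appear as a row up to complement (\lemmaref{lem:balanced}), and an independent $\Omega(\log\log M)$ from column distinctness (\lemmaref{lem:two-columns}). Your approach gives essentially the same $\Omega(n)$ bound but does not expose the colour table, so it cannot capture the paper's second, ID-range-dependent lower bound; on the other hand, it is a cleaner instantiation of the block machinery from \theoremref{thm:alos}. Both approaches hit the same wall for the same reason you articulate: the class of certificates consistent with bipartiteness is so permissive that the only structure one can extract per certificate is a single $2$-colouring, which has only $n$ bits of entropy, and nothing in either method forces the certificate to also spend $\Omega(\log n)$ bits per position to disambiguate identifiers.

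One concrete caution about your refinement (i): pre-colouring the blocks into two classes and only counting alternating orderings cannot help on its own, because the adversarial compatibility digraph that achieves $((b/2)!)^2$ Hamiltonian cycles \emph{is} the complete bipartite digraph respecting exactly that colouring, so restricting to alternating orderings does not reduce the adversary's count at all. Any genuine progress has to come from your direction (ii): forcing the certificate to encode a mapping from identifiers to sides in a way that cannot be compressed, which is precisely the step the paper's Lemmas~\ref{lem:balanced} and \ref{lem:two-columns} fall short of, and why the statement remains a conjecture.
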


We cannot prove the lower bound of the conjecture, but we can prove weaker inequalities. As the next theorem shows, the range of IDs is important for this problem. To study the dependency on the identifiers range, we now consider the maximum identifier to be a parameter $M$, that we do not bound by a polynomial any more. The following theorem holds.

\begin{theorem}\label{thm:bipartite} For \textsc{Bipartite}, for IDs bounded by $M$, there exist two constants $\alpha$ and $\beta$ such that
\[\alpha \max \{n, \log \log M\} \leq s_g(n) \leq \beta\min\{M,n\log M\}.\]
\end{theorem}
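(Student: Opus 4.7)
My plan is to handle the four bounds separately, since the upper and lower bounds come from essentially independent constructions.

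For the upper bound $s_g(n) \leq \beta \min\{M, n\log M\}$, I would give the prover two strategies and have her pick whichever produces the shorter certificate on the given instance. The first is an $O(M)$-bit bitmap indexed by $[M]$ that assigns to each potential identifier a bit in $\{0,1\}$ specifying its side of the bipartition. The second is an explicit $O(n\log M)$-bit list of the (at most $n$) identifiers that lie on one side of the bipartition. In both cases the verifier at a node $v$ only has to read its own identifier, those of its neighbors, and look them up in the global certificate; correctness is immediate.

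For the lower bound $s_g(n) \geq \alpha n$, I would adapt the block-counting technique used in the proof of \theoremref{thm:alos}. Fix an identifier set of size $n$ and consider a rich family of connected bipartite yes-instances (e.g.\ Hamiltonian paths, or trees) whose induced 2-colorings span $2^{\Omega(n)}$ distinct partitions of the ID set. For each such instance the global certificate must, through the verifier, realize the corresponding 2-coloring on the identifiers present; if two instances with incompatible 2-colorings shared the same certificate, one could splice together path fragments along a ``conflicting'' vertex pair to produce an odd cycle in which every node's local check still passes, a contradiction. A counting argument then forces at least $2^{\Omega(n)}$ distinct certificates, i.e., $s_g(n) = \Omega(n)$.

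For the lower bound $s_g(n) \geq \alpha \log\log M$, I would use single edges as yes-instances: for every pair $u, v \in [M]$ the edge $(u, v)$ is bipartite and therefore must be accepted under some certificate $L$. Define $G_L$ to be the graph on $[M]$ whose edges are exactly those accepted by the verifier when presented with just a single edge and certificate $L$. The family $\{G_L\}_L$ must cover $K_M$. I would then show that each $G_L$ must in fact be bipartite --- using odd-cycle no-instances that locally look like collections of single edges to rule out the contrary --- and invoke the elementary fact that any covering of $K_M$ by bipartite graphs has size at least $\lceil \log_2 M\rceil$, hence the number of distinct certificates is $\Omega(\log M)$ and $s_g(n) \geq \log\log M$.

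The main obstacle is controlling the verifier in the two lower bounds. Unlike local proofs, a global certificate does not come with a per-node label, so one cannot directly ``read off'' a color from $L$ at each node; one has to rule out cleverer verifier strategies that use structural information beyond a per-edge color test. This is significantly more delicate for the $\Omega(n)$ bound than for $\Omega(\log\log M)$, and I suspect this difficulty is precisely what separates the theorem above from the stronger $\Theta(n\log n)$ bound predicted by Conjecture~\ref{conj:even-cycles}.
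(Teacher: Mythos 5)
Your upper bound argument is correct and essentially identical to the paper's: a length-$M$ color table for the $O(M)$ bound, and an explicit list of (identifier, color) pairs for the $O(n\log M)$ bound.

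Both lower bounds, however, have genuine gaps, and they are exactly the gap you flag at the end of your proposal. For the $\Omega(n)$ bound you correctly gesture at block-based cycles in the spirit of \theoremref{thm:alos}, but the crucial step --- showing that a single global certificate $c$ implicitly \emph{defines} a 2-coloring $f_c$ of the identifier space that is consistent across \emph{every} block-based cycle accepting $c$ --- is asserted, not proven. This is precisely \lemmaref{lem:2-colouring-blocks} in the paper, and its proof is nontrivial: one builds a directed graph $G_c$ on blocks, proves it has no directed odd cycle (Claim~\ref{clm:no-odd-cycle}) by a view-indistinguishability argument, proves each weakly connected component is strongly connected (Claim~\ref{clm:strongly-connected}) by an inductive argument over the accepting cycles, and then invokes the theorem that a strongly connected digraph with no odd directed cycle is bipartite. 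Your ``splicing'' intuition is the germ of the no-odd-cycle claim, but without the strong-connectivity step you cannot conclude that $G_c$ is 2-colorable, so you cannot extract the global coloring $f_c$ on which the entire counting argument rests.

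The $\Omega(\log\log M)$ argument as you state it does not work. A single-edge instance $(u,v)$ has only two nodes, both of which see the entire graph (degree 1, diameter 1). Any sound verifier for \textsc{Bipartite} must accept a lone edge under \emph{every} certificate, since the instance is visibly bipartite. So your graphs $G_L$ are all equal to $K_M$ and carry no information. Moreover, the ``odd-cycle no-instances that locally look like collections of single edges'' do not exist: nodes on a cycle have degree $2$, nodes of a lone edge have degree $1$, so the local views can never coincide. The paper avoids this by reusing the same block machinery: once $f_c$ is in hand, it forms the table with columns indexed by blocks and rows indexed by certificates, and observes (\lemmaref{lem:two-columns}) that no two columns can be identical --- otherwise a block-based even cycle joining those two blocks would be rejected by every certificate. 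Distinctness of the $M/(2r+1)$ columns as bit strings over $\#\{\text{certificates}\}$ coordinates forces the number of certificates to be $\Omega(\log M)$, hence certificate size $\Omega(\log\log M)$. Both lower bounds thus flow from the one lemma you were missing.
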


Note that if $M=n$ then we get a tight $\Theta(n)$ bound, and that the $n\log M$ upper bound can be derived from the proof of Theorem \ref{thm:inequalities}. The $n$ lower bound holds for any ID range, but the $\log \log M$ bound shows that $n$ cannot be the right answer for every ID range: we can get arbitrarily large lower bound if we allow arbitrarily large IDs. The key step in the proof of the lower bounds is to prove that, although a priori a proof of bipartiteness is not required to explicitly give a 2-coloring to the nodes, the nodes can always extract a coloring from the certificate.

\begin{proof} We start with the upper bounds. The $n \log M$ upper bound comes from the certificate made by concatenating, the couples (ID,local proof) for every node, as in Theorem \ref{thm:inequalities}. The $M$ upper bound comes from another proof strategy: the prover will provide a table with $M$ cells, where cell $i$ will contain the bit indicating the colour of the node with ID $i$. In both cases the nodes will get their own colors and the colors of their neighbors from the certificate, and they cannot be fooled by the prover.

We now prove the lower bounds for restricted case of cycles. A priori, for an arbitrary scheme, the prover is not forced to provide explicitly a coloring to the nodes. We show that actually the proof always define a type of coloring. As in section \ref{sec:alos}, we will use blocks of nodes to build a large a number of instances.
The blocks are paths of $2r+1$ nodes, with consecutive IDs from $k(2r+1)+1$ to $(k+1)(2r+1)$. The IDs imply an orientation of the blocks. A \emph{block-based cycle} is a cycle made by concatenating blocks keeping a consistent orientation.

\begin{lemma}\label{lem:2-colouring-blocks}
For every certificate $c$, there exists a function $f_c:[1,M]\mapsto \{0,1\}$, such that for every block-based cycle $H$, if the nodes accept with certificate $c$, then $f$ defines a proper coloring of $H$.
\end{lemma}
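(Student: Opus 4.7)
My plan is to extract a 2-coloring of blocks from the certificate $c$, and lift it to a coloring of IDs by alternation within each block.

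First, I would isolate the structure of acceptance. A block $B$ is \emph{accepting} under $c$ if its central node accepts; this depends only on $B$ and $c$ because the center's $r$-neighborhood is contained in $B$. A directed pair of blocks $(B, B')$ is an \emph{accepting pair} if, when $B$ is glued to the left of $B'$, every node within distance $r$ of the connecting edge accepts; this depends only on $B$, $B'$, and $c$. Any accepted block-based cycle thus decomposes into a cyclic sequence of accepting blocks in which every consecutive pair is accepting.

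Next, I would introduce the \emph{block graph} $G_c$, whose vertices are the accepting blocks and whose undirected edges connect any two blocks that appear consecutively in some accepted block-based cycle. The central claim is that $G_c$ is bipartite. The key observation is that each block contributes $2r+1$ nodes, an odd number, so an accepted cycle with $k$ blocks has total length $k(2r+1)$; by the soundness of the scheme for \textsc{Bipartite} this cycle must itself be bipartite, forcing $k$ to be even. A pasting argument then lifts this parity constraint from individual cycles to arbitrary closed walks in $G_c$: given an undirected cycle in $G_c$, one can splice the accepted cycles witnessing its edges into a single accepted block-based cycle whose block count has the same parity, so an odd cycle in $G_c$ would produce an accepted non-bipartite instance, a contradiction.

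Fixing any 2-coloring $\psi \colon V(G_c) \to \{0,1\}$, I define $f_c(j) = \psi(B) \oplus (p \bmod 2)$ for the ID $j$ in position $p$ of block $B$ (and assign an arbitrary value when $B$ is not accepting). On any accepted cycle $H$, the colors alternate within each block by construction; at a boundary between consecutive blocks $B_i$ and $B_{i+1}$, the right endpoint of $B_i$ (at the even position $2r$) has color $\psi(B_i)$ and the left endpoint of $B_{i+1}$ has color $\psi(B_{i+1})$, and these differ since $\psi$ is a proper 2-coloring of $G_c$ and $\{B_i, B_{i+1}\}$ is an edge.

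The main obstacle will be making the pasting argument rigorous. An undirected cycle in $G_c$ does not immediately yield an accepted block-based cycle on the nose, because the blocks' intrinsic orientations must line up consistently around any such cycle, and two accepted cycles may share more blocks than just the intended splice points, potentially producing ID collisions in the pasted cycle. Carefully handling the directionality of blocks and managing the structure of spliced cycles -- possibly by choosing splices to avoid collisions, or by passing to closed walks and controlling parity along them -- will be the delicate part of the proof.
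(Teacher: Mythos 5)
Your undirected block graph $G_c$ creates an obstacle that you correctly flag but do not resolve, and this is exactly where the paper's proof diverges from yours. The paper works with a \emph{directed} block graph: a directed edge $(b_i, b_j)$ is present exactly when some accepted block-based cycle has block $b_i$ immediately followed by block $b_j$. With this directed structure, the pasting you describe is unnecessary. The crucial observation — which you essentially state when defining an accepting pair, but then abandon — is that a node's accept/reject decision depends only on its $r$-neighborhood, which spans at most two consecutive blocks. Therefore, for \emph{any} simple directed cycle in $G_c$, not just one coming from a single accepted instance, the corresponding block-based cycle has all nodes accepting: every boundary between consecutive blocks also appears in some accepting instance, and the verifier cannot distinguish the two. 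Since a simple directed cycle visits distinct blocks, there are no ID collisions, and if its length is odd the resulting cycle is a non-bipartite instance accepted by the scheme, a contradiction. This is the paper's Claim~\ref{clm:no-odd-cycle}.

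However, the directed picture demands an extra step that you do not have. Having no directed odd cycle does not by itself make a digraph bipartite as an undirected graph; the digraph on $\{a,b,c\}$ with edges $a \to b$, $b \to c$, $a \to c$ has no directed cycle at all yet its underlying undirected graph is a triangle. The paper closes this gap with a second claim (Claim~\ref{clm:strongly-connected}): every connected component of $G_c$ is strongly connected, which holds because every edge of $G_c$ arises from some accepted cycle and therefore already lies on a directed cycle. Combined with the absence of odd directed cycles, a standard structure theorem (a strongly connected digraph with no odd cycle is bipartite) produces the $2$-coloring. Your argument passes to the undirected block graph too early, which is precisely what forces you into the orientation-matching and splice-collision difficulties you describe; keeping the graph directed and invoking strong connectivity is what dissolves them.
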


In other words, for every certificate $c$, we can define a coloring that is consistent with all the block-based cycles that are accepting $c$. This means that, whatever the scheme is, every node can deduce from the certificate its colour and the ones of its neighbor, and check that there is no colour conflict.

\begin{proof}[Proof (Lemma \ref{lem:2-colouring-blocks}).]
First, note that as the blocks have odd length, a block-based cycle has even length if and only if it is composed of an even number of blocks. Then, for block-based cycles, considering the blocks as vertices, and trying to 2-colour the resulting cycle is equivalent to 2-colour the nodes of the original cycle.

Fix a certificate $c$. Consider the directed graph $G_c$, whose nodes are the blocks, and whose edge are defined the following way. There is an edge $(b_i,b_j)$ if and only if there exists a block-based cycle in which the block $b_i$ is followed by the block $b_j$, and for which $c$ is an accepting certificate. We show two properties of this graph.

\begin{claim}\label{clm:no-odd-cycle}
The graph $G_c$ contains no directed odd cycle.
\end{claim}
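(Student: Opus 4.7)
The plan is to argue by contradiction: suppose $G_c$ contains a directed odd cycle $b_{i_1} \to b_{i_2} \to \dots \to b_{i_k} \to b_{i_1}$ with $k$ odd, and construct a non-bipartite block-based cycle that the verifier accepts with certificate $c$. Form $H = b_{i_1} b_{i_2} \dots b_{i_k}$ by concatenating these blocks in the given cyclic order. Since the blocks have disjoint ID ranges, $H$ is a well-defined block-based cycle on $k(2r+1)$ nodes; since both $k$ and $2r+1$ are odd, $H$ has odd length and is therefore not bipartite.

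The heart of the argument is that every node of $H$ must accept with certificate $c$. The choice of block length $2r+1$ is exactly what guarantees locality of acceptance: the $r$-neighborhood of any node $u$ in a block-based cycle is contained either in the block of $u$ alone (when $u$ is the centre of its block) or in the union of $u$'s block and exactly one adjacent block (when $u$ lies within distance $r$ of a junction edge). In particular, no $r$-ball ever straddles three consecutive blocks, so $u$'s view, and hence its decision, depends only on the one or two blocks it can see together with $c$.

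With this locality in hand, match each node of $H$ to an accepting witness. If $u$ is the centre of some block $b_{i_j}$, then $b_{i_j}$ participates in some accepting block-based cycle for $c$ (as the endpoint of an edge of $G_c$); the centre of $b_{i_j}$ accepts in that cycle and therefore accepts in $H$ as well. If instead $u$ lies within distance $r$ of the junction between $b_{i_j}$ and $b_{i_{j+1}}$, then by the definition of $G_c$ the edge $(b_{i_j}, b_{i_{j+1}})$ furnishes an accepting block-based cycle containing exactly this junction, in which $u$'s $r$-view is identical to its $r$-view in $H$; hence $u$ accepts in $H$. Therefore every node of $H$ accepts and the verifier accepts a non-bipartite cycle, contradicting the soundness of the scheme for \textsc{Bipartite}.

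The main subtlety — and the reason the statement is about odd cycles of $G_c$ rather than individual arcs — is precisely this locality transfer: acceptance at a junction witnessed in one cycle must carry over verbatim to $H$, which needs the block length to be at least $2r+1$ so that $r$-neighborhoods never reach across a block. The parity condition on $k$ is likewise essential, since splicing an even directed cycle of $G_c$ would produce an even-length block-based cycle and no contradiction with bipartiteness.
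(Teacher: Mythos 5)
Your proof is correct and follows essentially the same approach as the paper: splice the blocks along the odd directed cycle of $G_c$ to form an odd block-based cycle $H$, then use the fact that each node's $r$-view lies within at most two consecutive blocks and matches its $r$-view in some accepting witness cycle supplied by the edges of $G_c$, forcing acceptance of the odd (hence non-bipartite) instance. The only cosmetic difference is that you single out the block centres as a separate case, whereas the paper folds them into a ``first half of the block'' case by symmetry; both handle the locality transfer identically.
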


Suppose the graph $G_c$ contains a directed odd cycle. We study the block-based cycle $C$, associated to this cycle. This instance is not in the language because it is a cycle of odd length. We claim that the nodes will anyway accept this instance if they are given the certificate $c$.
Consider a node $v$ of this instance, and suppose without loss of generality that it is in the first half of its block $b$, that is its ID is between $k(2r+1)+1$ and $k(2r+1)+r+1$.
The view of $v$, that is its $r$-neighborhood, contains only nodes from its own block $b$ and from the previous block $b'$. By construction of the instance, if there is an edge $(b',b)$ in $G_c$, then there exists a block-based cycle of even length $C'$ where the nodes of $b$ follow the ones of $b'$, and in this instance every node accepts. The view of $v$ in~$C$ is the same as the view of the node with the same ID in $C'$, then it should accept in $C$. Hence every node accepts in $C$; which is a contradiction. Thus the graph $G_c$ contains no directed odd cycle.

\begin{claim}\label{clm:strongly-connected}
Every connected component of the graph $G_c$ is strongly connected.
\end{claim}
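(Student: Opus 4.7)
My plan is to show that every edge of $G_c$ lies on a closed directed walk in $G_c$, from which strong connectedness of each weakly connected component will follow by a routine argument.

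First I would establish the key fact: for every edge $(b_i, b_j) \in G_c$, there is a directed path in $G_c$ from $b_j$ back to $b_i$. By definition of $G_c$, the edge $(b_i, b_j)$ is witnessed by some block-based cycle $C$ that is accepting under the certificate $c$ and in which $b_j$ immediately follows $b_i$. Writing the blocks of $C$ in cyclic order as $b_i = u_0, u_1 = b_j, u_2, \dots, u_{m-1}, u_m = u_0$, the same cycle $C$ is an accepting block-based cycle in which $u_{t+1}$ follows $u_t$ for every $t$; hence each transition $(u_t, u_{t+1})$ is itself an edge of $G_c$. Concatenating these edges yields a directed path from $b_j = u_1$ back to $b_i = u_0$ in $G_c$, so $(b_i, b_j)$ sits on a directed cycle of $G_c$.

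Next, to deduce the claim, let $u$ and $v$ belong to the same weakly connected component and pick an undirected path $u = w_0, w_1, \dots, w_k = v$ in $G_c$. For each step at least one of $(w_t, w_{t+1})$ or $(w_{t+1}, w_t)$ is an edge of $G_c$; by the previous paragraph there is then also a directed path in the opposite orientation. Thus between every pair of consecutive vertices of the undirected path we have directed reachability in both directions, and concatenating yields directed paths from $u$ to $v$ and from $v$ to $u$. Isolated vertices are strongly connected trivially, so the claim holds for every component.

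The main obstacle, modest as it is, lies in the first step: one has to observe that a single accepting witness $C$ for the edge $(b_i, b_j)$ simultaneously certifies \emph{every} consecutive-block transition of $C$ as an edge of $G_c$. This relies on the precise form of the definition of $G_c$, which only requires existence of \emph{some} accepting block-based cycle containing the given transition, rather than a property intrinsic to a pair of adjacent blocks. Once this observation is in place, no further geometric argument about cycles is needed and the statement falls out immediately.
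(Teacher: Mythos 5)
Your proof is correct and rests on the same key observation the paper uses: each accepting block-based cycle $C$ simultaneously witnesses \emph{every} consecutive-block transition in $C$ as an edge of $G_c$, so $G_c$ is a union of directed cycles. The paper packages this as an induction that adds one accepting cycle at a time while preserving strong connectivity of components, whereas you state it as ``every edge lies on a directed cycle, hence each weakly connected component is strongly connected''; these are two standard presentations of the same argument.
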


Consider the following way to build $G_c$: take an arbitrary ordering of the cycles that accept~$c$, and add them to $G_c$ (i.e. add their edges) one by one. We show the strong connectivity of the connected components by induction. The property holds for the empty graph. Suppose every connected component is strongly connected until some step, and that we add a new cycle. Either it defines a new connected component, which is then strongly connected as it is a directed cycle, or it is added to an existing connected component. And this later case, let $v$ be a node that is in both the old connected component and the new cycle. Note that every node in the modified connected component has a path to $v$ and a path from $v$, this shows that this component is strongly connected.

It is known that a strongly connected digraph with no odd length cycles is bipartite (see e.g. Theorem 1.8.1 in \cite{Bang-JensenG02}). Thus, from Claim \ref{clm:no-odd-cycle} and Claim \ref{clm:strongly-connected}, we get that $G_c$ is bipartite, and we can 2-colour it. As noted at the beginning of this proof, this implies that every even-cycle has a 2-coloring. And the lemma follows.
\end{proof}

Consider the following table. The columns are indexed by the blocks, thus there are $M/(2r+1)$ of them. The rows are indexed by the certificates. The cell that corresponds to block $b$ and certificate~$c$ contains the colour, 0 or 1, given by $f_c$ to the centre node of~$b$. We will now give two simple properties of this table that will imply the two lower bounds.

Let a \emph{balanced bit string} be a bit string with the same number of 0s and 1s. Let the \emph{complement} of a bit string be the bit string where every 1 has been replaced by a 0, and vice versa.

\begin{lemma}\label{lem:balanced}
For every balanced bit string $s$ of length $n$, there exists a raw of the table such that the $n$ first cells form either $s$ or its complement.
\end{lemma}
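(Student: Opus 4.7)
The plan is to build, for any balanced $s \in \{0,1\}^n$, a bipartite block-based cycle $H$ on the blocks $b_1, \dots, b_n$ whose (essentially unique) proper $2$-colouring assigns to the centre of $b_i$ a colour determined by $s_i$. Then any accepting certificate $c$ for $H$ will, through Lemma~\ref{lem:2-colouring-blocks}, produce a row of the table whose first $n$ entries are either $s$ or its complement.

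Concretely, let $I_0 = \set{i \in [n]}{s_i = 0}$ and $I_1 = \set{i \in [n]}{s_i = 1}$; because $s$ is balanced, $|I_0| = |I_1| = n/2$. I arrange the blocks $b_1, \dots, b_n$ around a cycle $H$ so that positions $1, 3, 5, \dots$ are filled (in some order) by the blocks indexed by $I_0$ and positions $2, 4, 6, \dots$ by the blocks indexed by $I_1$, respecting the fixed orientation of each block. The resulting $H$ has $n(2r+1)$ nodes, which is even since $n$ is even, so $H$ is bipartite and thus a \emph{yes}-instance for \textsc{Bipartite}. In particular, there exists a global certificate $c$ that makes every node of $H$ accept.

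By Lemma~\ref{lem:2-colouring-blocks}, the function $f_c \colon [1,M] \to \{0,1\}$ is a proper $2$-colouring of $H$. A short parity check, using the fact that each block has odd length $2r+1$, shows that in any proper $2$-colouring of a cycle assembled from such blocks the centres of two consecutive blocks always receive opposite colours. Consequently, $f_c$ gives a single colour $\beta \in \{0,1\}$ to the centres of all blocks placed at odd positions of $H$ (the blocks in $I_0$) and the opposite colour $1-\beta$ to the centres of all blocks in $I_1$. Reading off the row of the table indexed by $c$ at columns $b_1, \dots, b_n$ therefore produces $s$ when $\beta = 0$ and the complement of $s$ when $\beta = 1$, which is exactly what the lemma requires.

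The only mildly delicate step is the parity observation that the centres of consecutive blocks must be coloured differently by any proper $2$-colouring; everything else is a direct construction followed by an appeal to Lemma~\ref{lem:2-colouring-blocks}. The fact that we can only guarantee $s$ \emph{or} its complement, rather than $s$ itself, is intrinsic: the bipartition of a connected bipartite cycle is only defined up to swapping the two colour classes.
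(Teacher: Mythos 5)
Your proposal is correct and takes essentially the same route as the paper: build a block-based cycle $H$ whose proper $2$-colouring corresponds to $s$ (up to complementation), note that $H$ is even and hence a \emph{yes}-instance, take an accepting certificate $c$, and read off the row of $c$ via Lemma~\ref{lem:2-colouring-blocks}. You simply spell out the parity bookkeeping (that centres of consecutive blocks must receive opposite colours because each block has odd length) that the paper leaves implicit.
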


\begin{proof}[Proof (Lemma \ref{lem:balanced}).]
Consider a balanced bit string $s$. Consider a cycle $H$ made of the $n$ first blocks, in an ordering such that for every $i$ between 1 and $n$, coloring block $i$  with the $i^{th}$ bit of $s$, defines a proper coloring of the cycle. Note that, as $s$ is balanced, such a cycle must exist. This cycle $H$ has even length thus it belongs to the language and there exists an accepting certificate $c$. The $n$ first cell of the row of $c$ must describe a proper coloring of $H$, and there are only two such colorings: $s$ and its complement.
\end{proof}

As for every balanced string of length $n$ there exists a row that matches it or its complement (on the $n$ first cells), and that a row can only correspond to one such string, up to complement, the table must have at least $2^n/2$ rows. This means that there are at least $2^n/2$ different certificates, thus the certificate size is lower bounded by~$n$, up to multiplicative constants.

\begin{lemma}\label{lem:two-columns}
Not two columns of the table can be equal.
\end{lemma}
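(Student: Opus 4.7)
The plan is to prove Lemma \ref{lem:two-columns} by contradiction using the coloring extracted in Lemma~\ref{lem:2-colouring-blocks}. Assume that two columns, corresponding to distinct blocks $b_i$ and $b_j$, are identical in the table. By the definition of the table this means that for every certificate $c$, the function $f_c$ produced by Lemma~\ref{lem:2-colouring-blocks} assigns the same colour to the centre node of $b_i$ and to the centre node of $b_j$. The goal is to exhibit a particular block-based cycle on which this cannot happen, contradicting the assumption.

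The construction I would use is the smallest possible one: form the block-based cycle $H$ that consists only of $b_i$ and $b_j$ concatenated in the consistent orientation, so that the last node of $b_i$ is joined to the first node of $b_j$ and the last node of $b_j$ is joined to the first node of $b_i$. This cycle has $2(2r+1)$ nodes, hence even length, so $H$ belongs to \textsc{Bipartite} and therefore admits an accepting certificate $c^*$. By Lemma~\ref{lem:2-colouring-blocks}, $f_{c^*}$ must give a proper $2$-colouring of $H$. In any proper $2$-colouring of a cycle, two nodes receive the same colour if and only if they are at an even distance along the cycle. The centres of $b_i$ and $b_j$ are at distance $r + 1 + r = 2r+1$ along $H$ (from the centre of $b_i$ one crosses $r$ edges to reach the endpoint of $b_i$, one edge into $b_j$, and then $r$ further edges to reach the centre of $b_j$). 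Since $2r+1$ is odd, $f_{c^*}$ assigns them different colours, contradicting the assumption that the two columns coincide.

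The main obstacle, such as it is, is purely notational: one must verify that a two-block concatenation really is a legal block-based cycle under the orientation convention used throughout Section~\ref{sec:bipartite}. Since all blocks share the same internal orientation, gluing any two of them head-to-tail and tail-to-head yields a consistently oriented cycle of $2(2r+1)$ nodes with no ID collisions, so the construction is admissible. Once Lemma~\ref{lem:two-columns} is established, combining it with Lemma~\ref{lem:balanced} gives the two claimed lower bounds: distinct columns force $M/(2r+1)\le 2^{\#\text{rows}}\le 2^{2^{s_g(n)}}$, yielding $s_g(n)=\Omega(\log\log M)$, while the balanced-string argument already forces $s_g(n)=\Omega(n)$.
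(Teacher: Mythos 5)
Your proof is correct and follows essentially the same strategy as the paper: contradict the assumption by exhibiting an even block-based cycle in which $b_i$ and $b_j$ are adjacent, take an accepting certificate $c^*$ for it, and observe that $f_{c^*}$ from Lemma~\ref{lem:2-colouring-blocks} would have to give the two block centres the same colour even though they sit at odd distance $2r+1$. The only differences are that the paper allows an arbitrary even-length block-based cycle containing $b_i$ adjacent to $b_j$ whereas you use the minimal two-block cycle, and that you spell out the parity of the distance between block centres, which the paper leaves implicit.
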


\begin{proof}[Proof (Lemma \ref{lem:two-columns})]
Suppose columns $i$ and $j$ are equal. Consider an even-length block-based cycle $C$, where the blocks $i$ is linked to the block $j$. Such a cycle always exists. For every certificate $c$, the same colour is given to both blocks $i$ and $j$, because the cell that correspond to ($c$, $b_i$) and ($c$, $b_i$) contain the same bit. Then at least one node of these blocks will reject. This is a contradiction because $C$ is in the language. Thus all the columns are different.
\end{proof}

If we want the columns to be different, we need them to be different bit strings. As there are $2^k$ bits strings of length $k$, we need order of $\log(M)$ different certificates. Then the length of a certificate if at least $\log \log(M)$, up to multiplicative constants. This finishes the proof of Theorem~\ref{thm:bipartite}.
\end{proof}

\section{Local decision and communication complexity}\label{sec:complexity-theory}

In this section we present the nondeterministic hierarchies for local decision and communication complexity.

\paragraph{Nondeterministic hierarchy of local decision.}
Feuilloley et al.\ \cite{FeuilloleyFH16} introduced a nondeterministic \emph{hierarchy of local decision}. It is the distributed computing analogue of the classical polynomial hierarchy. A prover and a disprover take turns, providing each node with a label of size $O(\log n)$. The nodes then look at their constant-radius neighborhood, including the nondeterministic labels, and decide whether they accept or not.

The classes $\Sigma_k$ and $\Pi_k$ correspond to the languages that can be decided using $k$ levels of nondeterminism --- in $\Sigma_k$ the prover goes first, and in $\Pi_k$ the disprover. Let $\ell_1, \ell_2, \dots, \ell_k$ denote the $k$ levels of nondeterministic labels provided to the nodes. A language $L \in \Sigma_k$ if and only if there exists a verifier $A$ such that
\[
  (G,x) \in L \iff \exists \ell_1, \forall \ell_2, \dots, \operatorname{Q} \ell_k, \forall v \in V(G), A \text{ accepts.}
\]
Here $\operatorname{Q}$ denotes the existential quantifier if $k$ is odd and the universal quantifier otherwise. The classes $\Pi_k$ are defined similarly, but with the disprover (i.e. universal quantifier) going first.

The classes that corresponds to the disprover talking last collapse to the previous level, and the only interesting levels are $\Sigma_1, \Pi_2, \Sigma_3, \dots$, which are called $(\Lambda_k)_k$. The complements of these classes are denoted by $\colambda{i}$ and we have that $\colambda{k} \subseteq \Lambda_{k+1}$~\cite{FeuilloleyFH16}, i.e., decision can always be reversed using an extra quantifier with $O(\log n)$ bits. As shown in Theorem~\ref{thm:alos}, in general, $\Omega(\log n)$ bits are also required for reversing decision.

The main open question of Feuilloley et al.\ \cite{FeuilloleyFH16} was whether $\Lambda_2$ and $\Lambda_3$ were different or not. As in the polynomial hierarchy, the equality $\Lambda_k = \Lambda_{k+1}$ of two levels would imply a collapse of the local hierarchy down to the $k$th level. We show that this question is related to long-standing open questions nondeterministic communication complexity~\cite{babai86complexity}.

\paragraph{A hierarchy for global certificates.} Similar to the hierarchy of local certificates, we can define a hierarchy for the global certificates. Define $\Sigma^G_k$ and $\Pi^G_k$ as previously, except that the labels $\ell_1, \ell_2, \dots, \ell_k$ are global certificates seen by all nodes.

\paragraph{Communication complexity.} We will compare the hierarchies of nondeterministic local decision to the hierarchy of nondeterministic communication complexity defined by Babai et al.\ \cite{babai86complexity}.

In the communication complexity setting we are given a boolean function $f$ on $2n$ bits. Two entities, Alice and Bob, are each given $n$-bit vectors $x$ and $y$, and have to decide if $f(x \cup y) = 1$. They can communicate through a reliable channel and have unlimited computational resources. The measure of complexity is the number of bits Alice and Bob need to communicate in order to decide $f$. For more details, see for example the book~\cite{KushilevitzN97}.

In nondeterministic communication complexity Alice and Bob have access to nondeterministic advice (we will say that it is given by a \emph{prover}). The cost of a protocol is the sum of the number of bits communicated by Alice and Bob and the number of advice bits given by the prover. This means that messages of Alice and Bob can equivalently be encoded in the advice.

Babai et al.\ defined a hierarchy of nondeterministic communication complexity~\cite{babai86complexity}. In addition to Alice and Bob we have two players, whom we will call \emph{prover} and \emph{disprover} for consistency, giving nondeterministic advice to Alice and Bob. Prover and disprover will alternate $k$ times and each time give an advice string of $g(n)$ bits. Now we define the class $\Sigcc_k(g(n))$ of boolean functions as the set of functions such that there exists an algorithm $A$ for Alice, and an algorithm $B$ for Bob such that if $f \in \Sigcc_k(g(n))$, then

\[
  \forall x,y, \exists \ell_1, \forall \ell_2, \dots, \operatorname{Q} \ell_k A(\ell_1, \ell_2, ...,\ell_k, x) = B(\ell_1, \ell_2,...,\ell_k,y)= 1 \iff f(x,y)=1.
\]

Again $\operatorname{Q}$ denotes the existential quantifier if $k$ is odd and the universal quantifier otherwise. The classes $\Picc_k(g(n))$ are defined similarly, but with the disprover going first.
We are particularly interested in this hierarchy when $g(n) = O(\log n)$. Note that in their work, Babai et al.\ consider the hierarchy for $g(n) = O(\poly(\log n))$~\cite{babai86complexity}.

\subsection{Connecting local decision and communication complexity}

In this section we partially formalize the intuition that complexity of local verification is connected to communication complexity. We show that general lower bound proof techniques for nondeterministic local verification will also apply to communication complexity. We then show that if one considers global proofs instead of local ones, the result can be strengthened.

\begin{theorem} \label{thm:cc-lv-connection}
  For every boolean function $f$, there exists a distributed language $L_f$ such that if $f \in \Sigma_k^{cc}(g(n))$ for odd $k$ or $f \in \Pi_k^{cc}(g(n))$ for even $k \geq 2$, then $L_f \in \Lambda_k(g(n))$.
\end{theorem}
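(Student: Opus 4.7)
The plan is to pick $L_f$ to be a family of short paths where the two inputs of $f$ live at the two endpoints, a distance too long to be seen by any single verifier. Concretely, for the verification radius $r$, an instance of $L_f$ is a path $v_0 - v_1 - \cdots - v_{2r+1}$ with $(\mathrm{Alice},x)$ at $v_0$ and $(\mathrm{Bob},y)$ at $v_{2r+1}$, where $x,y\in\{0,1\}^n$; interior nodes carry a null input; and the instance belongs to $L_f$ iff $f(x,y)=1$. The length $2r+1$ ensures that no single node's $r$-view contains both inputs, forcing any verifier to rely on nondeterministic labels to bridge the two sides.

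Given a $\Sigcc_k(g(n))$ protocol (odd $k$) or a $\Picc_k(g(n))$ protocol (even $k\ge 2$) for $f$, with advice strings $\ell_i$ of $g(n)$ bits and algorithms $A$ and $B$, I would build a $\Lambda_k(g(n))$ scheme for $L_f$ whose label at each level $i$ is essentially a broadcast copy of the CC advice $\ell_i$. Every interior node checks that at each level its label agrees with those of its neighbors; by chaining along the connected path this enforces a globally uniform label at each level whenever all interior nodes accept. The Alice-endpoint additionally simulates $A$ on its own labels together with $x$, and the Bob-endpoint simulates $B$ on its own labels with $y$; each accepts iff the simulated algorithm outputs $1$.

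With this setup, correctness would be an induction on $k$ that matches CC strategies to local strategies level by level. When $f(x,y)=1$, the local prover at each of its turns broadcasts the CC prover's advice, and against any consistent disprover play it reproduces an accepting CC execution at both endpoints. When $f(x,y)=0$, the CC disprover's witness broadcast consistently at every node forces every consistent local prover response to reproduce a rejecting CC execution at some endpoint, while inconsistent local prover responses trip the interior consistency checks.

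The hard part will be handling players that submit \emph{inconsistent} labels along the path, a possibility with no CC analogue, which in principle could cause interior nodes to reject even on $L_f$-instances. My approach will be to enlarge the prover's label at its final (last) level with a declared canonical sequence $(\hat\ell_1,\dots,\hat\ell_{k-1})$ of all earlier advice strings, replace the interior consistency checks on raw earlier labels by consistency checks on this declaration, and have the endpoints additionally verify that each $\hat\ell_j$ agrees with their own observed label at disprover level $j$. Since the prover moves last in $\Lambda_k$ and knows the full prior play, a level-by-level case analysis shows that this auxiliary declaration effectively restricts any local disprover play to one equivalent to a CC disprover strategy at the endpoints, at which point the CC correctness guarantee takes over; combined with the simulation of prover levels this yields $L_f\in\Lambda_k(g(n))$.
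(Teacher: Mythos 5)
Your overall template matches the paper's: encode $f$ on a long path with the two CC inputs at the endpoints, have the nondeterministic players broadcast CC advice across the path, and let the endpoints simulate Alice and Bob. (The paper additionally encodes $x,y$ as graph isomorphism classes attached to the endpoints rather than as raw node labels, but that is a presentational choice, not a structural one.)

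The genuine gap is in your last paragraph, the handling of an inconsistent \emph{disprover}. Your fix --- have the prover's final label carry a canonical declaration $(\hat\ell_1,\dots,\hat\ell_{k-1})$ of all earlier advice, check consistency of the declaration at interior nodes, and have the endpoints check that $\hat\ell_j$ matches their own observed label --- does not work. Consider a \emph{yes}-instance on which a cheating disprover, at some level $j$, gives label $a$ to the Alice-endpoint and label $b\neq a$ to the Bob-endpoint. Whatever single value $\hat\ell_j$ the prover declares, it can agree with at most one of $a$ and $b$, so at least one endpoint fails its check and rejects. The prover therefore cannot force all nodes to accept, and the scheme is unsound on \emph{yes}-instances. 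The declaration mechanism does not ``restrict the disprover to CC-like play''; it merely shifts the rejection to an endpoint. What is needed is precisely the paper's move: the prover's final label instead \emph{points to the location of the inconsistency} (an $O(\log n)$-bit spanning-tree certificate rooted at a defective node), so the node that actually witnesses the inconsistency verifies it directly and every other node accepts by deferring to the pointer. This is the same ``reversing decision'' primitive the paper studies in Section 4, and the theorem's hypothesis $g(n)=\Omega(\log n)$ is exactly what makes it affordable. With that replacement your argument goes through; without it the proof has a hole in the soundness case for an inconsistent disprover.
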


The proof is by showing that there exists a family of languages such that a nondeterministic verification scheme can simulate a nondeterministic communication protocol. The theorem partially explains why it is difficult to separate the different levels of the local decision hierarchy --- the question is inherently tied to long-standing open questions in communication complexity~\cite{babai86complexity}.

\begin{proof}[Proof of Theorem~\ref{thm:cc-lv-connection}]
  Let $f$ be a boolean function on $2n$ variables. We will construct an infinite family of graphs $\mathcal{G}_n = \bigl(G(n,t,x,y)\bigr)_{t,x,y}$ and a related language $L_f$.

  The graph $G_{(n,t,x,y)}$ consists of a path $P_{2t+1} = (v_1, v_2, \dots, v_{2t+1})$ of length $2t+1$, and two sets of nodes, $V_A$ and $V_B$ of size $n$. Let us denote $v_A = v_1$ and $v_B = v_{2t+1}$. We add an edge between each $v \in V_A$ and $v_A$, and an edge between each $u \in V_B$ and $v_B$. The nodes $v_A$ and $v_B$ are labelled with their respective identities.

  Parameters $x$ and $y$ are bit vectors of length $n$, corresponding to the inputs of players $A$ and $B$ in the communication complexity setting. To encode the input vectors, we use graphs on $V_A$ and $V_B$, respectively. There are $2^n$ possible input vectors. We'll define a function $\phi$ that maps each graph on $n$ nodes to an $n$-bit vector. Since the encoding of the input cannot depend on the unique identifiers, $\phi$ must map all graphs of the same isomorphism class to the same vector. Finally, since there are at least
  \[
    2^{n \choose 2} / n! = \Omega(2^{n^2})
  \]
  such graph isomorphism classes, we can find a $\phi$ such that for all $x \neq y$, we have that $\phi^{-1}(x) \cap \phi^{-1}(y) = \emptyset$.

   Given $\phi$, $x$, and $y$, we can choose two graphs $G_A \in \phi^{-1}(x)$ and $G_B \in \phi^{-1}(y)$, identify the node sets $V_A$ and $V_B$ with $V(G_A)$ and $V(G_B)$, respectively, and add the corresponding edges to the graph $G(n,t,x,y)$. We will use $G_A$ and $G_B$, respectively, to denote these graphs on node sets $V_A$ and $V_B$. Nodes $v_A$ and $v_B$ are labelled as special nodes so that the structure of $G_A$ and $G_B$ can be detected. We denote this graph construction by $G(n,t,x,y)$.

   \paragraph{Local verification of $\mathcal{G}_f$.} A single $O(\log n)$-bit certificate is enough to verify the structure of $G(n,t,x,y)$. It first consists of a spanning tree of $P_{2t+1}$: node $v_A$ is marked as root, and each node $v_i$ has a pointer to $v_{i-1}$ and a counter $i$, its distance to the root. It also contains the value $n$. The nodes $v_A$ and $v_B$ can check that the sizes of the graph $G_A$ and $G_B$ are consistent with this value. They also check that there are no other outgoing edges from $G_A$ and $G_B$. Nodes $v_A$ and $v_B$ can see all nodes of $G_A$ and $G_B$, and determine their isomorphism classes, and compute $x = \phi(G_A)$ and $y = \phi(G_B)$.

   \paragraph{Deciding $L_f$.} We say that $G \in L_f$ if and only if
   \begin{enumerate}[noitemsep]
     \item the structure of $G$ is that of $G(n,t,x,y)$ for some setting of the parameters, and
     \item the function $f$ evaluates to 1 on $\phi(G_A) \cup \phi(G_B)$.
   \end{enumerate}

   Now assume that $f$ is on the $k$th level of the communication complexity hierarchy with $s = \Omega(\log n)$ bits of nondeterminism. We can use this implied protocol $P$ to solve $L_f$ on the $k$th level. If the graph structure is correct, the prover and disprover essentially simulate their counterparts from the communication complexity setting, and label \emph{all} nodes on $P_{2t+1}$ as if in $P$. Then $v_A$ can simulate $A$ and $v_B$ can simulate $B$, accepting if and only if $f(x,y) = 1$. If the prover tries to deviate from this strategy, nodes can see that its labelling of $P_{2t+1}$ is not constant, and reject. If the disprover tries to deviate, the prover can construct a certificate pointing to this error, and all nodes will accept.
   \end{proof}

\paragraph{Global proofs and communication complexity.}

In the setting of global proofs we can show a slightly stronger theorem.

\begin{theorem} \label{thm:cc-gv-collapse}
  For every boolean function $f$ and every $g(n) = \Omega(\log n)$ there exists a distributed language $L_f$ such that $L_f \in \Lambda_k(g(n))$, for $k \ge 1$ if and only if $f$ is in the $k$th level of the communication complexity hierarchy with $O(g(n))$ bits of nondeterminism, in particular $f \in \Sigcc_k$ for $k$ odd or $f \in \Picc_k$ for $k$ even.
\end{theorem}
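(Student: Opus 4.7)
The plan is to reuse the graph family $G(n,t,x,y)$ and language $L_f$ constructed in the proof of \theoremref{thm:cc-lv-connection}, and upgrade that argument to a biconditional that holds in the global-certificate hierarchy.

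For the easier direction, suppose $f$ is on the $k$th level of the nondeterministic communication hierarchy with advice strings of $O(g(n))$ bits. I would replay the construction of \theoremref{thm:cc-lv-connection}, observing that the global setting actually simplifies things: at each level the global certificate can simply be the corresponding CC advice string $\ell_i$, together with $O(\log n)$ bits that encode $n$ and mark the identifiers of $v_A$ and $v_B$. No path-consistency check is required, since every node already sees exactly the same certificate. The endpoints $v_A$ and $v_B$ recognize themselves by their anomalously high degree, simulate Alice and Bob on $x=\phi(G_A)$ and $y=\phi(G_B)$, and accept iff the simulated protocol accepts; interior path nodes only need to verify locally that their neighborhood matches the prescribed path. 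Since $g(n)=\Omega(\log n)$, the total certificate size per level stays $O(g(n))$.

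For the converse, suppose $L_f$ is decided by a global scheme with verification radius $r$ and certificates of size $g(n)$. Fix $t>r$ so that the $r$-neighborhoods of $v_A$ and $v_B$ are disjoint. Given inputs $x$ and $y$, Alice and Bob agree in advance on a canonical assignment of identifiers to all of $G(n,t,x,y)$; this requires no communication since they share $n$ and $t$. The CC protocol then runs $k$ alternating rounds in which the CC prover and disprover deliver exactly the global certificates that the distributed prover and disprover would produce. After the advice is fixed, Alice reconstructs $G_A=\phi^{-1}(x)$ and simulates the verifier on $v_A$ together with every node in its $r$-ball (all inside $V_A$); Bob does the symmetric simulation on $v_B$ and $V_B$. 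Both players also \emph{independently} simulate the interior path nodes, whose $r$-views depend only on the canonical identifiers and on the shared global certificates. Each player outputs $1$ iff every node it simulates accepts.

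The main obstacle I expect is soundness in the converse direction, and it is here that the global nature of the certificates is essential: a CC advice string, by definition seen identically by both parties, maps one-to-one onto a global certificate in the distributed scheme, so every adversarial CC strategy corresponds to a genuine certificate configuration for $G(n,t,x,y)$. If $f(x,y)=0$ then $G(n,t,x,y)\notin L_f$, so under every such configuration some node rejects, and whichever of Alice's, Bob's or an interior simulation catches that node forces a $0$ output. Completeness is immediate, because an accepting distributed certification of $G(n,t,x,y)$ supplies the CC advice directly. Combining the two directions yields the claimed biconditional.
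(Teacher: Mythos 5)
Your proposal follows essentially the same route as the paper: reuse the $G(n,t,x,y)$ / $L_f$ construction from \theoremref{thm:cc-lv-connection}, simulate the CC protocol with global certificates for the forward direction, and have Alice and Bob simulate the global verifier on a virtual instance for the converse, with the CC prover/disprover playing the distributed prover/disprover. Your observation that Alice and Bob must also simulate the interior path nodes (and that for $t>r$ every node's $r$-view is fully determined by one of the two players plus the shared certificate) is a useful clarification that the paper only implicitly assumes.

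One detail to tighten in the forward direction: a global certificate consisting only of the CC advice, $n$, and the identifiers of $v_A$ and $v_B$ does not let interior path nodes verify that the graph really has the structure $G(n,t,x,y)$; a node in the middle of the path (distance more than $r$ from both endpoints) cannot distinguish a legal instance from, say, a malformed graph with a spurious branch or a disconnected stub. The paper handles this by including the explicit list of path-node identifiers in the global certificate (which is $O(t\log n) = O(\log n)$ bits for $t$ constant), and you should do the same; your phrase ``prescribed path'' suggests you intend this, but your certificate size accounting leaves it out.
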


In particular, this theorem implies that any collapse in the hierarchy for global certificates implies a collapse in the corresponding communication complexity hierarchy.

\begin{proof}[Proof of Theorem~\ref{thm:cc-gv-collapse}]
  We show that with respect to the language $L_f$ defined in the proof of Theorem~\ref{thm:cc-lv-connection}, the communication complexity model and the global verification model can simulate each other.

  \begin{enumerate}
    \item \emph{Communication protocol implies a global verification protocol.} The proof proceeds essentially as in the proof of Theorem~\ref{thm:cc-lv-connection}. Using $O(t \log n)$ bits the global certificate can give the list of nodes on the path between $v_A$ and $v_B$. If a node has degree 2, it must see its own name on this list. Nodes $v_A$ and $v_B$ can again locally verify the structure of $G_A$ and $G_B$ and recover $x$ and $y$. Finally the prover and disprover follow the communication protocol $P$ on instance $(x,y)$, allowing nodes $v_A$ and $v_B$ to simulate Alice and Bob.

    \item \emph{Global verification scheme implies a communication protocol.} Assume there is a $k$th level global verification scheme with $g(n)$-bit certificates for $L_f$.

    Alice and Bob will simulate this scheme as follows. Construct a virtual graph $G(x,y)$ consisting of three parts: the nodes $v_A$ and $v_B$, a path $P_{2t+1}$ of length $2t+1$ between them, and graphs $H(x)$ and $H(y)$ that are the first elements (in some order) of $\phi^{-1}(x)$ and $\phi^{-1}(y)$, respectively. Finally, all nodes of $H(x)$ are connected to $v_A$ and all nodes of $H(y)$ to $v_B$. Only Alice will know $H(x)$ and only Bob $H(y)$.

    This graph is in $L_f$ if and only if $f(x,y) = 1$: the structure is exactly as in the definition of $L_f$.

    Now the nondeterministic prover and disprover can simulate their counterparts in the global verification scheme. Alice and Bob accept if and only if the prover can force all nodes they control to accept. Thus the complexity is bounded by the complexity $g(n)$ of the global verification scheme.
  \end{enumerate}
\end{proof}

\paragraph{Acknowledgement.} The authors would like to thank Jukka Suomela for mentioning that \textsc{Amos} could be an interesting problem in this context.

\newpage{}

\DeclareUrlCommand{\Doi}{\urlstyle{same}}
\renewcommand{\doi}[1]{\href{http://dx.doi.org/#1}{\footnotesize\sf doi:\Doi{#1}}}

\bibliographystyle{plainnat}
\bibliography{local_verification}

\begin{thebibliography}{22}
\providecommand{\natexlab}[1]{#1}
\providecommand{\url}[1]{\texttt{#1}}
\expandafter\ifx\csname urlstyle\endcsname\relax
  \providecommand{\doi}[1]{doi: #1}\else
  \providecommand{\doi}{doi: \begingroup \urlstyle{rm}\Url}\fi

\bibitem[Babai et~al.(1986)Babai, Frankl, and Simon]{babai86complexity}
Lazslo Babai, Peter Frankl, and Janos Simon.
\newblock Complexity classes in communication complexity theory.
\newblock In \emph{Proc. 27th Annual Symposium on Foundations of Computer
  Science (FOCS 1986)}, pages 337--347, 1986.
\newblock \doi{10.1109/SFCS.1986.15}.

\bibitem[Bang{-}Jensen and Gutin(2002)]{Bang-JensenG02}
J{\o}rgen Bang{-}Jensen and Gregory Gutin.
\newblock \emph{Digraphs - theory, algorithms and applications}.
\newblock Springer, 2002.
\newblock ISBN 978-1-85233-611-0.

\bibitem[Censor{-}Hillel et~al.(2017)Censor{-}Hillel, Paz, and
  Perry]{Censor-HillelPP17}
Keren Censor{-}Hillel, Ami Paz, and Mor Perry.
\newblock Approximate proof-labeling schemes.
\newblock In \emph{Structural Information and Communication Complexity - 24th
  International Colloquium, {SIROCCO} 2017, Porquerolles, France, June 19-22,
  2017, Revised Selected Papers}, pages 71--89, 2017.
\newblock \doi{10.1007/978-3-319-72050-0_5}.
\newblock URL \url{https://doi.org/10.1007/978-3-319-72050-0_5}.

\bibitem[Cormen et~al.(2009)Cormen, Leiserson, Rivest, and Stein]{CormenLRS09}
Thomas~H. Cormen, Charles~E. Leiserson, Ronald~L. Rivest, and Clifford Stein.
\newblock \emph{Introduction to Algorithms, 3rd Edition}.
\newblock {MIT} Press, 2009.
\newblock ISBN 978-0-262-03384-8.
\newblock URL \url{http://mitpress.mit.edu/books/introduction-algorithms}.

\bibitem[Feuilloley and Fraigniaud(2016)]{FeuilloleyF16}
Laurent Feuilloley and Pierre Fraigniaud.
\newblock Survey of distributed decision.
\newblock \emph{Bulletin of the {EATCS}}, 119, 2016.
\newblock URL
  \url{http://bulletin.eatcs.org/index.php/beatcs/article/view/411/391}.

\bibitem[Feuilloley and Fraigniaud(2017)]{FeuilloleyF17}
Laurent Feuilloley and Pierre Fraigniaud.
\newblock Error-sensitive proof-labeling schemes.
\newblock In \emph{31st International Symposium on Distributed Computing,
  {DISC} 2017, October 16-20, 2017, Vienna, Austria}, pages 16:1--16:15, 2017.
\newblock \doi{10.4230/LIPIcs.DISC.2017.16}.
\newblock URL \url{https://doi.org/10.4230/LIPIcs.DISC.2017.16}.

\bibitem[Feuilloley et~al.(2016)Feuilloley, Fraigniaud, and
  Hirvonen]{FeuilloleyFH16}
Laurent Feuilloley, Pierre Fraigniaud, and Juho Hirvonen.
\newblock {A Hierarchy of Local Decision}.
\newblock In \emph{Proc. 43rd International Colloquium on Automata, Languages,
  and Programming (ICALP 2016)}, volume~55 of \emph{Leibniz International
  Proceedings in Informatics (LIPIcs)}, pages 118:1--118:15, 2016.
\newblock \doi{10.4230/LIPIcs.ICALP.2016.118}.

\bibitem[Fraigniaud et~al.(2011)Fraigniaud, Korman, and Peleg]{FraigniaudKP11}
Pierre Fraigniaud, Amos Korman, and David Peleg.
\newblock Local distributed decision.
\newblock In \emph{{IEEE} 52nd Annual Symposium on Foundations of Computer
  Science, {FOCS} 2011, Palm Springs, CA, USA, October 22-25, 2011}, pages
  708--717, 2011.
\newblock \doi{10.1109/FOCS.2011.17}.
\newblock URL \url{https://doi.org/10.1109/FOCS.2011.17}.

\bibitem[Fraigniaud et~al.(2013)Fraigniaud, Korman, and Peleg]{FraigniaudKP13}
Pierre Fraigniaud, Amos Korman, and David Peleg.
\newblock Towards a complexity theory for local distributed computing.
\newblock \emph{J. {ACM}}, 60\penalty0 (5):\penalty0 35, 2013.
\newblock \doi{10.1145/2499228}.
\newblock URL \url{http://doi.acm.org/10.1145/2499228}.

\bibitem[G{\"{o}}{\"{o}}s and Suomela(2016)]{GoosS16}
Mika G{\"{o}}{\"{o}}s and Jukka Suomela.
\newblock Locally checkable proofs in distributed computing.
\newblock \emph{Theory of Computing}, 12\penalty0 (1):\penalty0 1--33, 2016.
\newblock \doi{10.4086/toc.2016.v012a019}.
\newblock URL \url{https://doi.org/10.4086/toc.2016.v012a019}.

\bibitem[G{\"{o}}{\"{o}}s et~al.(2016)G{\"{o}}{\"{o}}s, Pitassi, and
  Watson]{GoosP016}
Mika G{\"{o}}{\"{o}}s, Toniann Pitassi, and Thomas Watson.
\newblock The landscape of communication complexity classes.
\newblock In \emph{43rd International Colloquium on Automata, Languages, and
  Programming, {ICALP} 2016, July 11-15, 2016, Rome, Italy}, pages 86:1--86:15,
  2016.
\newblock \doi{10.4230/LIPIcs.ICALP.2016.86}.
\newblock URL \url{https://doi.org/10.4230/LIPIcs.ICALP.2016.86}.

\bibitem[Gur and Rothblum(2015)]{GurR15}
Tom Gur and Ron~D. Rothblum.
\newblock Non-interactive proofs of proximity.
\newblock In \emph{Proceedings of the 2015 Conference on Innovations in
  Theoretical Computer Science, {ITCS} 2015, Rehovot, Israel, January 11-13,
  2015}, pages 133--142, 2015.
\newblock \doi{10.1145/2688073.2688079}.
\newblock URL \url{http://doi.acm.org/10.1145/2688073.2688079}.

\bibitem[Korman and Kutten(2007)]{KormanK07}
Amos Korman and Shay Kutten.
\newblock Distributed verification of minimum spanning trees.
\newblock \emph{Distributed Computing}, 20\penalty0 (4):\penalty0 253--266,
  2007.
\newblock \doi{10.1007/s00446-007-0025-1}.
\newblock URL \url{https://doi.org/10.1007/s00446-007-0025-1}.

\bibitem[Korman et~al.(2005)Korman, Kutten, and Peleg]{KormanKP05}
Amos Korman, Shay Kutten, and David Peleg.
\newblock Proof labeling schemes.
\newblock In \emph{Proceedings of the Twenty-Fourth Annual {ACM} Symposium on
  Principles of Distributed Computing, {PODC} 2005, Las Vegas, NV, USA, July
  17-20, 2005}, pages 9--18, 2005.
\newblock \doi{10.1145/1073814.1073817}.
\newblock URL \url{http://doi.acm.org/10.1145/1073814.1073817}.

\bibitem[Korman et~al.(2010)Korman, Kutten, and Peleg]{KormanKP10}
Amos Korman, Shay Kutten, and David Peleg.
\newblock Proof labeling schemes.
\newblock \emph{Distributed Computing}, 22\penalty0 (4):\penalty0 215--233,
  2010.
\newblock \doi{10.1007/s00446-010-0095-3}.
\newblock URL \url{https://doi.org/10.1007/s00446-010-0095-3}.

\bibitem[Koutsoupias and Papadimitriou(2009)]{KoutsoupiasP09}
Elias Koutsoupias and Christos~H. Papadimitriou.
\newblock Worst-case equilibria.
\newblock \emph{Computer Science Review}, 3\penalty0 (2):\penalty0 65--69,
  2009.
\newblock \doi{10.1016/j.cosrev.2009.04.003}.
\newblock URL \url{https://doi.org/10.1016/j.cosrev.2009.04.003}.

\bibitem[Kuhn(2005)]{Kuhn2005}
Fabian Kuhn.
\newblock \emph{The price of locality: exploring the complexity of distributed
  coordination primitives}.
\newblock PhD thesis, {ETH} Zurich, 2005.
\newblock URL \url{http://d-nb.info/977273725}.

\bibitem[Kushilevitz and Nisan(1997)]{KushilevitzN97}
Eyal Kushilevitz and Noam Nisan.
\newblock \emph{Communication complexity}.
\newblock Cambridge University Press, 1997.
\newblock ISBN 978-0-521-56067-2.

\bibitem[Lov{\'{a}}sz and Vesztergombi(2013)]{LovaszV13}
L{\'{a}}szl{\'{o}} Lov{\'{a}}sz and Katalin Vesztergombi.
\newblock Non-deterministic graph property testing.
\newblock \emph{Combinatorics, Probability {\&} Computing}, 22\penalty0
  (5):\penalty0 749--762, 2013.
\newblock \doi{10.1017/S0963548313000205}.
\newblock URL \url{https://doi.org/10.1017/S0963548313000205}.

\bibitem[Ostrovsky et~al.(2017)Ostrovsky, Perry, and Rosenbaum]{OstrovskyPR17}
Rafail Ostrovsky, Mor Perry, and Will Rosenbaum.
\newblock Space-time tradeoffs for distributed verification.
\newblock In \emph{Structural Information and Communication Complexity - 24th
  International Colloquium, {SIROCCO} 2017, Porquerolles, France, June 19-22,
  2017, Revised Selected Papers}, pages 53--70, 2017.
\newblock \doi{10.1007/978-3-319-72050-0_4}.
\newblock URL \url{https://doi.org/10.1007/978-3-319-72050-0_4}.

\bibitem[Papadimitriou(2001)]{Papadimitriou01}
Christos~H. Papadimitriou.
\newblock Algorithms, games, and the internet.
\newblock In \emph{Proceedings on 33rd Annual {ACM} Symposium on Theory of
  Computing, July 6-8, 2001, Heraklion, Crete, Greece}, pages 749--753, 2001.
\newblock \doi{10.1145/380752.380883}.
\newblock URL \url{http://doi.acm.org/10.1145/380752.380883}.

\bibitem[Patt{-}Shamir and Perry(2017)]{Patt-ShamirP17}
Boaz Patt{-}Shamir and Mor Perry.
\newblock Proof-labeling schemes: Broadcast, unicast and in between.
\newblock In \emph{Stabilization, Safety, and Security of Distributed Systems -
  19th International Symposium, {SSS} 2017, Boston, MA, USA, November 5-8,
  2017, Proceedings}, pages 1--17, 2017.
\newblock \doi{10.1007/978-3-319-69084-1_1}.
\newblock URL \url{https://doi.org/10.1007/978-3-319-69084-1_1}.

\end{thebibliography}

\end{document}